\newtheorem{remark}{Remark}
\newtheorem{assumption}{Assumption}
\newtheorem{theorem}{Theorem}
\newtheorem{corollary}{Corollary}
\newtheorem{lemma}{Lemma}
\numberwithin{condition}{section}
\numberwithin{assumption}{section}
\numberwithin{remark}{section}
\numberwithin{equation}{section}
\numberwithin{lemma}{section}
\numberwithin{definition}{section}
\numberwithin{theorem}{section}
\numberwithin{proposition}{section}
\numberwithin{table}{section}
\numberwithin{figure}{section}
\numberwithin{theorem}{section}
\numberwithin{corollary}{section}
\numberwithin{property}{section}
\newcommand{\EQ}{\begin{equation}}
\newcommand{\EN}{\end{equation}}
\newcommand{\EQS}{\begin{equation*}}
\newcommand{\ENS}{\end{equation*}}
\def\n1{n}
\newsavebox{\savepar}
\numberwithin{equation}{section}
\numberwithin{table}{section}
\numberwithin{figure}{section}
\newcommand*\patchAmsMathEnvironmentForLineno[1]{%
  \expandafter\let\csname old#1\expandafter\endcsname\csname #1\endcsname
  \expandafter\let\csname oldend#1\expandafter\endcsname\csname end#1\endcsname
  \renewenvironment{#1}%
     {\linenomath\csname old#1\endcsname}%
     {\csname oldend#1\endcsname\endlinenomath}}%
\newcommand*\patchBothAmsMathEnvironmentsForLineno[1]{%
  \patchAmsMathEnvironmentForLineno{#1}%
  \patchAmsMathEnvironmentForLineno{#1*}}%
\begin{document}


\title{
     $\epsilon$-Monotone Fourier Methods for Optimal Stochastic Control in Finance
}

\author{Peter A. Forsyth \thanks{David R. Cheriton School of Computer Science,
        University of Waterloo, Waterloo ON, Canada N2L 3G1,
        \texttt{paforsyt@uwaterloo.ca}, +1 519 888 4567 ext.\ 34415.}
        \and
         George Labahn\thanks{David R. Cheriton School of Computer Science,
        University of Waterloo, Waterloo ON, Canada N2L 3G1,
        \texttt{glabahn@uwaterloo.ca}, +1 519 888 4567 ext.\ 34667}
}

\maketitle

\vspace{.2in}
\begin{abstract}
\vspace{10pt}

\vspace{.1in}

\noindent {\color{black}
Stochastic control problems in finance often involve complex controls at 
discrete times. As a result numerically solving such problems, for example using methods based
on partial differential or integro-differential equations, 
inevitably give rise to low order accuracy, usually at most second order. In 
many cases one can make use of Fourier methods to efficiently advance solutions
between control monitoring dates and then apply numerical optimization methods 
across decision times. However Fourier methods are not monotone and 
as a result give rise to possible violations of arbitrage inequalities.  This
is problematic in the context of control problems, where the control is determined by comparing value functions.
In this paper we give a preprocessing step for Fourier
methods which involves projecting the Green's function
onto the set of linear basis functions.  The resulting
algorithm is guaranteed to be monotone (to within a
tolerance), $\ell_\infty$-stable and satisfies an $\epsilon$-discrete comparison principle.
In addition the algorithm has the same complexity per step as
a standard Fourier method while at the same time having second
order accuracy for smooth problems.}

\noindent

\vspace{.1in}

\noindent
\textbf{Keywords:} Monotonicity, Fourier methods, discrete comparison, optimal stochastic control, finance 

\vspace{.1in}

\noindent
\textbf{Running Title:} $\epsilon$-Monotone Fourier

\vspace{.1in}

\noindent
\textbf{Key Messages:}
\begin{itemize}
    \item Current Fourier methods (FST/CONV/COS) are not necessarily monotone
    \item We devise a pre-processing step for FST/CONV methods which are monotone
          to a user specified tolerance
    \item The resulting methods can be used safely for optimal control problems
          in finance
\end{itemize}

\end{abstract}

\vspace{.1in}

\section{Introduction}

{\color{black}
Optimal stochastic control problems in finance often involve monitoring or making decisions at discrete points in time.  
These monitoring times typically cause difficulties when solving  optimal stochastic control problems numerically, both for efficiency and correctness. 
Efficiency because numerical methods are typically applied from one monitoring time to the next. Correctness arises as an issue when the decision is determined by comparing value functions, something problematic when discrete approximations are not monotone. 
These optimal stochastic problems arise in many 
important financial applications.  This includes problems such as asset allocation  
\citep{li-ng:2000,huang2010,forsyth2016abc,Cong2016},
pricing of variable annuities  
\citep{Bauer2008,Kwok2008,Chen2008,Song2016,Garcia2017,Kwok2018},
and hedging in discrete time  \citep{Rem2013,Ang2014} to name just a few.}

These optimal control problems are typically modeled as the solutions of Partial Integro Differential Equations (PIDEs), 
which can be solved via numerical, finite difference  \citep{Chen2008} or Monte Carlo \citep{Cong2016}  
methods. When cast into dynamic programming form, the optimal control problem reduces to solving a PIDE backwards in time 
between each decision point, and then determining the optimal control at each such point.
In many cases, including for example those mentioned above, the models are based on fairly simple
stochastic processes, with the main interest being the behaviour of the optimal controls. These simple stochastic models can be justified if one is looking at long term problems, for example, variable annuities or saving for retirement, where
the time scales are of the order of 10-30 years. In these situations it is reasonable to use a parsimonious stochastic
process model.

In these (and many other) situations the characteristic function of the associated stochastic process is known in closed form. For the type of PIDEs that appear in financial problems, knowing the characteristic function implies that the Fourier transform of the solution is also known in closed form. By discretizing these Fourier transforms we obtain an
approximation to the solution which can be used for effective numerical computation. 
A natural approach in this case is to use a Fourier scheme to advance the solution in a single time step between decision times, and then apply a numerical optimization approach to advance the solution across the decision time. This technique is repeated until the current time is reached \citep{Ruijter2013,Lippa2013}. These methods are based on Fourier Space Timestepping (FST) \citep{Surkov2008}, the CONV technique \citep{Lord2008} or the COS algorithm \citep{Fang2008}. Fourier methods have been applied to pricing of exotic variance products and volatility derivatives \citep{Zheng2014}, guaranteed minimum withdrawal benefits 
\citep{Song2016,Garcia2017,Kwok2018} 
and equity-indexed annuities \citep{Deng2017} to name just a few.

Fourier methods have a number of advantages compared to finite difference and other methods. First and foremost is that there are no timestepping errors between decision dates. These methods also provide easy handling for stochastic processes involving jump diffusion (\cite{Lippa2013}) and regime switching (\cite{Surkov2008}). 
Although Fourier methods typically need a large number of  discretization points, 
the algorithms reduce to using finite FFTs which are efficiently available on most platforms (including also GPUs). The algorithms are also quite easy to implement. For example, using Fourier methods for the pricing of variable annuities 
reduces to the use of discrete FFTs and local optimization. Detailed knowledge of PDE algorithms is not actually required in this case.
Fourier methods also easily extend to  multi-factor stochastic process where  finite difference methods have difficulties because
of cross derivative terms. 
{\color{black}
Of course, Fourier methods suffer from the curse of dimensionality, and hence
are restricted, except in special cases, to problems of dimension three or less.}
Finally, Fourier methods have good convergence properties for problems
with non-complex controls. For example, for
European option pricing, in cases  where the characteristic function of the underlying stochastic process is known, the COS
method achieves exponential convergence (in terms of number of terms of the Fourier series) \citep{Fang2008}.   

A major drawback with current, existing Fourier methods is that they are not monotone.  In the contingent claims context, monotone methods preserve arbitrage inequalities, 
or discrete comparison properties, independent of any discretization errors.  
As a concrete example, consider the case of a variable annuity contract, with 
ratchet features
and withdrawal controls at each decision date. Suppose contract A has a larger payoff at the terminal time than contract B.  Then a monotone scheme generates a value for contract A which
is always larger than the value of contract B, at all points in time and space, 
regardless of the accuracy of the  numerical scheme. In a sense, the arbitrage 
inequality (discrete comparison) condition is the financial 
equivalence of conservation of mass  in engineering computations. 
Use of non-monotone methods is especially problematic in the context of 
control problems, where the control is determined by comparing value functions. 

Monotonicity is also relevant for the convergence of numerical schemes. In general,
optimal control problems posed as PIDEs  are nonlinear and do not have unique solutions. 
The financially relevant solution is the viscosity solution of the 
PIDEs and it is well known \citep{barles-souganidis:1991} that a discretization of a PIDE converges to the viscosity solution if it is monotone, consistent and stable. There are examples \citep{Oberman2006} where
non-monotone discretizations fail to converge and also 
examples where there is convergence \citep{pooley2002} 
but not to the financially correct viscosity solution. In addition, in cases where the Green's function
has a thin peak, existing non-monotonic Fourier methods require a very small space step which often results in numerical issues.
Finally, monotone schemes are more reliable  for the numeric computation of Greeks (i.e. derivatives of the solution), 
often important information for  financial instruments.

{\color{black}
The starting point for this paper is the assumption that we have a closed form representation of the
Fourier transform of the Green's function of the stochastic process PIDE. From  a practical point of view, 
we also assume that a spatial shift property also
holds.  This last assumption can be removed 
but at a cost of reducing the good computational complexity of our method.
We will discuss these assumptions further in subsequent sections.
}

In this paper we present a new Fourier algorithm in which monotonicity can be guaranteed  to within a user specified numerical tolerance. The algorithm is for use with general optimal control problems in finance. In these  general control problems the
objective function may be complex and non-smooth, and hence the optimal control at each step must  be determined by a
numerical optimization procedure. Indeed, in many cases, this is done by discretizing the control and using exhaustive search. Reconstructing the Fourier coefficients is typically done by assuming the control is constant over discretized intervals of the physical space, numerically determining the control at the midpoint of these intervals and finally by reconstructing the Fourier coefficients by quadrature. This is equivalent to using a type of trapezoidal rule to reconstruct the Fourier coefficients and hence this can be at most second order accurate (in terms of the physical domain mesh size).

In fact we show how one can modify the FST or CONV schemes to get new schemes in which monotonicity can be guaranteed  to within a user specified numerical tolerance. Our approach is similar to that used in these schemes which first approximate the solution of a linear PIDE by a Green's function convolution, then discretize the convolution and finally carry out the dense matrix-vector multiply efficiently using an FFT. In our case we discretize the value function, and generate a continuous approximation of the function by assuming linear basis (or alternatively piecewise constant)  functions.  Given this approximation, we carry out an exact integration of the convolution integral and then truncate the series approximation of this integral so that monotonicity holds to within a tolerance.  
Consequently, we prove that our algorithm has an 
$\epsilon-$Discrete Comparison Property, that is,  
given a tolerance $\epsilon$, then a 
discrete comparison (a.k.a. arbitrage inequality) holds  
to $O(\epsilon)$, independent of the number of discretization nodes and timesteps.
This is similar in spirit to the $\epsilon$-monotone schemes 
discussed in, for example, \citet{Bokanowski}. 
Typically the convergence to the integral is exponential in the series truncation parameter so it is inexpensive to specify a small tolerance. The key idea here is that the number of terms required to accurately determine the projection of the Green's function onto linear basis functions can be larger than the number of basis functions. 
After an initial set up cost,
the complexity per step is the same as the standard FST or CONV methods.
This requires only a small change to existing codes in order to guarantee monotonicity. The desirable property of our method is that monotonicity can be guaranteed (to within a tolerance) independent of the number of FST (CONV) grid 
nodes or time-step size.

While Fourier methods have good convergence properties for  vanilla
contracts or problems where controls are smooth, it is a
different story for general optimal control problems. For example, 
if the COS method is applied to optimal control problems,
then it is challenging to maintain exponential convergence as the 
optimal control must be determined in the physical space.
Hence, a highly accurate recursive expression for the Fourier coefficients must be found after application of the optimal
controls, in order to maintain exponential convergence. In the case of 
bang-bang controls, it is often possible to separate the
physical domain into regions where the control is constant. 
If these regions are determined to high accuracy, then an accurate
algorithm for recursive generation of the Fourier coefficients 
can be developed (\cite{Ruijter2013}). However, even for the case
of an American option, this requires careful analysis and implementation (\cite{Fang2008}). Our interest is in general problems,
where the control may not be of the bang-bang type, and we expect that such good convergence properties will not hold.  
In addition, in the path dependent case, the problem is usually converted into Markovian form through
additional state variables.  The dynamics of these state variables are typically represented by a
deterministic equation (between monitoring dates). At monitoring dates, the state variable may have non-smooth
jumps (e.g. cash flows) and hence the standard approach would be to discretize this state variable, and then
to interpolate the value function across the monitoring dates.  If linear interpolation is used, this also
implies that the solution is at most second order accurate at a monitoring date.

While monotone schemes have good numerical properties they appear to inherently be low order methods.  However,  it would seem that in the most general case, it is difficult to develop high order schemes for control problems.  For example, in the COS method, this problem can be traced to the difficulty of reconstructing the Fourier coefficients after numerically determining the optimal control at discrete points in the physical space. Consequently, in this article we focus on FST or CONV techniques, which use straightforward procedures to move between Fourier space and the physical space (and vice versa).  

We illustrate the behaviour of our algorithm by comparing various implementations of FST/CONV on some model option pricing
examples, in particular European and Bermudan options.  In addition, we demonstrate the use of the monotone scheme
methods on a realistic asset allocation problem. Our main conclusion is that for problems with complex controls, where we can
expect fairly low order convergence to the solution, a small change to standard FST or CONV methods can be made which
guarantees monotonicity, at least to within a user specified tolerance. This does not alter the order of convergence in this case,
hence we can ensure a monotone scheme  with only a slightly increased set up cost. After the initialization, the complexity per
step of the monotone method is the  same as the standard FST/CONV algorithm.

The remainder of this paper is as follows. In the next section we describe our optimal control problem in a general setting. Section \ref{sec3} is used to describe existing Fourier methods which allows us  to contrast with our new monotone Fourier method presented in Section \ref{sec4}. The monotone algorithm for solving optimal control problems is then given in Section \ref{sec5} with properties of the algorithm and proofs appearing in the following section. Wrap-around is an important issue for Fourier methods, particularly in the case of our control problems.  Our method of minimizing such error is described in Section \ref{wrap_section}. Section \ref{sec7} presents two numerical examples used to stress test the monotone algorithm. This is followed by an application of our algorithm to the multiperiod mean variance optimal asset allocation problem, a general optimal control problem well suited to our monotone methods. The paper ends with a conclusion and topics for future research.

\section{General Control Formulation}\label{sec2}

In this section we describe our optimal control problem in a general setting.
Consider a set of intervention (or monitoring) times $t_n$
\begin{equation}
\hat{\mathcal{T}} \equiv \{t_0 \leq \dots  \leq t_M\}
\label{eq:intervention_time}
\end{equation}
with $t_0 = 0$ the inception time of the investment and $t_M=T$ the terminal time. 
For simplicity, we specify the set of intervention times
to be equidistant, that is, 
$t_{n} - t_{n-1} = \Delta t = T/M$ for each $n$.

Let $t_n^- = t_n - \epsilon$ and $t_n^+ = t_n + \epsilon$, with
$\epsilon \rightarrow 0^+$, denote the instant before and after the $n^{th}$
monitoring time $t_n$.  We define a value function $\hat{v}(x,t)$ with domain $x \in \mathbb{R}$ (we
restrict attention to one dimensional problems for ease of
exposition), which satisfies 
\begin{eqnarray}
  \hat{v}_t + \mathcal{L} \hat{v} = 0 &;& t \in (t_n^+, t_{n+1}^-)~
\end{eqnarray}
with $\mathcal{L}$ a partial integro-differential operator.
At $t_n \in \hat{\mathcal{T}}$ we find an optimal control $\hat{c}(x,t_n)$ via
\begin{eqnarray}
   \hat{v}( x, t_n^{-}) &=& \inf_{\hat{c} \in \mathbb{Z} } \mathcal{M}({\hat{c}} ) \hat{v}(x, t_n^+)
\end{eqnarray}
where $\mathcal{M}( {\hat{c}} )$ is an intervention operator and
$\mathbb{Z}$ is the set of admissible controls.  

It is more natural to rewrite these equations going backwards in time $\tau = T - t$, that is, in terms of
time to completion. In this case the value function is $ v(x, \tau) = \hat{v}( x, T -t)$  and satisfies
\begin{eqnarray}
  v_\tau - \mathcal{L} v = 0 ~~~~~~~~~~~~~~~~~~~~& ; & \tau \in ( \tau_n^+, \tau_{n+1}^- )~ \label{PIDE_1} \\
    v( x, \tau_{n}^{+}) = \inf_c \mathcal{M}(c) v(x, \tau_{n}^-) & ; &  \tau_n \in \mathcal{T} ~~~~~~~~~~~~~ .
                                                 \label{intervention_1}
\end{eqnarray}
Here the control  $c( x, \tau) = \hat{c}( x, T- \tau)$ and $\mathcal{T}$ now refers to the
set of backwards intervention times 
\begin{eqnarray}
{\mathcal{T}} &\equiv& \{ \tau_0 \leq \dots  \leq \tau_M \} ~~~ \mbox{ with }  \tau_0 = 0,~~ \tau_M=T  \mbox{ and } \tau_n = T - t_{M-n}~.
                        \nonumber      
\label{eq:intervention_time_back}
\end{eqnarray}

A typical intervention operator has the form
\begin{eqnarray}
    \mathcal{M}({{c}} ) {v}(x, \tau_n^-) 
   &=& {v}(x + \Gamma( x, \tau_n^-, c), \tau_n^- )  ~.
    \label{intervention_def}
\end{eqnarray}
As an example, in the context of portfolio allocation,
we can interpret $\Gamma( x, \tau_i^-, c)$ as a rebalancing rule.
In general, there can also be cash flows associated with the
decision process, as in the case of variable annuities. 
However,  for simplicity we will ignore such a generalization in this paper,
and assume that the intervention operator has the form (\ref{intervention_def}).  
In our asset allocation example (described later), the cash flows are modeled by updating a path dependent variable.

\section{CONV and FST Methods}\label{FST_section}\label{sec3}

In this section, we derive the FST and closely related
CONV technique in an intuitive fashion.  This will allow
us to contrast these methods with the monotone technique
developed in the following section.
For ease of exposition, we will continue to restrict attention to one dimensional
problems. However, there is no difficulty generalizing this
approach to the multi-dimensional case.
In a financial context, we often
have that the variable $x = \log(S) \in (-\infty, \infty)$, 
where $S$ is an asset price.

\subsection{Green's Functions}

A solution of the PIDE (\ref{PIDE_1})
\begin{eqnarray}
   v_{\tau} - \mathcal{L} v = 0 ~;~ \tau \in ( \tau_n, \tau_{n+1} ]~ \nonumber
\end{eqnarray}
can be represented in terms of the Green's function of the PIDE, a function 
typically of the form $g = g( x, x^{\prime}, \Delta \tau)$. However,
in many cases this Green's function will have the form 
$g = g( x - x^{\prime}, \Delta \tau)$ and we will assume this to hold in 
our problems.
{\color{black}
More formally, we make the following assumptions, which we
assume to hold in the rest of this work.
\begin{assumption}[Form of Green's function]\label{green_assump_a}
We assume that the Green's function can be written as
\begin{eqnarray}
   g(x, x^{\prime}, \Delta \tau) & = & g( x - x^{\prime}, \Delta \tau) \nonumber \\
                 & = &  \int_{-\infty}^{\infty} G(\omega, \Delta \tau) e^{  2 \pi i \omega (x - x^{\prime})} ~d\omega
\end{eqnarray}
where $G(\omega, \Delta \tau)$ is known in closed form, and $G(\omega, \Delta \tau)$ is independent
of $(x, x^{\prime})$.
\end{assumption}
\begin{remark}[Assumption \ref{green_assump_a}]
If we view the Green's function as a scaled conditional  probability density $f$ then our assumption is that $f(y|x)$   only depends on $x$ and $y$ via  their difference $f(y|x)~=~f(y -  x)$. This assumption holds for L\'evy processes (independent and stationary increments), but does not hold, for example, for a Heston stochastic
volatility model 
nor for  mean reverting Ornstein-Uhlenbeck processes (but see \citet{surkov_2010,Zhang2012,Shan_2014} for possible work-arounds).
The $\epsilon$-monotonicity modifications described in this paper also hold when we do not have $g(x, x^{\prime}, \Delta \tau) =  g( x - x^{\prime}, \Delta \tau)$ but at the price of reduced efficiency. This is discussed later in Section 4.2. 
The second assumption, that we know the Fourier transform of our Green's function in closed form,  is the case, for example, in situations where the characteristic function of the underlying stochastic process is known. In the case of L\'evy processes, the L\'evy-Khintchine formula provides such an explicit representation of the characteristic function. 
\end{remark}
}

\noindent 
From Assumption \ref{green_assump_a}, the exact solution of our PIDE is then 
\begin{eqnarray}
   v(x, \tau + \Delta \tau) & = & \int_{\mathbb{R}}  g(x - x^{\prime}, \Delta \tau) v( x^{\prime}, \tau)~d x^{\prime}
     ~. \label{green_1}
\end{eqnarray}
The Green's function has a number of important properties \citep{menaldi_1992}. For this work the two properties
\begin{eqnarray}
     \int_{\mathbb{R}} g(x, \Delta \tau)~dx  & = & C_1  ~ \leq  ~1  ~~\mbox{ and } ~~
     g(x, \Delta \tau)   \geq   0  \label{green2} 
    \end{eqnarray}
are particularly important.\footnote{
For the examples in this paper the constant $C_1$ is explicitly given (in each example) in 
Appendix \ref{green_appendix}} These properties are formally proven in \citep{menaldi_1992}, 
but can also be deduced from the interpretation of the Green's function as a scaled probability density.

We define the Fourier  transform pair for the Green's function as
\begin{eqnarray}
    G(\omega, \Delta \tau) & = & \int_{-\infty}^{\infty} g(x, \Delta \tau) e^{ - 2 \pi i \omega x} ~dx
         \nonumber \\
    g(x, \Delta \tau) & = & \int_{-\infty}^{\infty} G(\omega, \Delta \tau) e^{  2 \pi i \omega x} ~d\omega
      ~
         \label{FT_b}
\end{eqnarray}
with a closed form expression for $G(\omega, \Delta \tau)$ being available.

As is typically the case, we assume that the Green's function $g(x, \Delta \tau)$
decays to zero as $ |x|\rightarrow \infty$, that is, 
$g(x, \Delta \tau)$ is negligible outside a region $ x \in [-A ,   A]$. 
Choosing  $x_{\min} < -A$ and $x_{\max} >  A$, we  localize the computational 
domain for the integral in equation (\ref{green_1}) so that $ x \in [x_{\min}, x_{\max}]$.
We can therefore replace the Fourier transform pair (\ref{FT_b}) by their Fourier series equivalent
\begin{eqnarray}
   G(\omega_k, \Delta \tau) & \simeq & \int_{x_{\min}}^{x_{\max}} g( x, \Delta \tau) ~e^{ - 2 \pi i \omega_k x } ~dx
                                     \nonumber \\
   \hat{g}(x, \Delta \tau)  & = &  
      \frac{1}{P} \sum_{k=-\infty}^{\infty}  G(\omega_k, \Delta \tau) ~e^{ 2 \pi i \omega_k x} 
                     ~ \label{series_1} 
\end{eqnarray}
with $ P = x_{\max} - x_{\min}$ and $ \omega_k = \frac{k}{P}$.  
Here the scaling factors in equation (\ref{series_1}) are selected to be consistent
with the scaling in (\ref{FT_b}).
The solution of the PIDE (\ref{green_1}) is then approximated as
\begin{eqnarray}
   v(x, \tau + \Delta \tau) & \simeq & \int_{ x_{\min}}^{x_{\max}}  
         \hat{g}(x - x^{\prime}, \Delta \tau) v( x^{\prime}, \tau)~d x^{\prime}
     . \label{green_2}
\end{eqnarray}
Note that the Fourier series  (\ref{series_1}) implies a periodic extension of $\hat{g}$, that is, 
  $\hat{g}( x + P, \tau) = g(x, \tau).$
The localization assumption also then implies that $v(x, \tau)$ is periodically extended. 

Substituting the Fourier series  (\ref{series_1})  into (\ref{green_2}) gives our approximate solution as
\begin{eqnarray}
     v( x, \tau + \Delta \tau) & \simeq &  
             \frac{1}{P} \sum_{k=-\infty}^{\infty} G( \omega_k, \Delta \tau) e^{ 2 \pi i \omega_k x}
                                         \int_{x_{\min}}^{x_{\max}} v(x^{\prime}, \tau) 
                                                 e^{-2 \pi i \omega_k x^{\prime}} ~dx^{\prime}.
              \label{conv_1}
\end{eqnarray}
Let $ \Delta x = \frac{P}{N}$ and choose points $\{x_j\}, \{x^{\prime}_j\}$ by
\begin{eqnarray}
     & & x_j =  \hat{x}_0 + j \Delta x ~;~  x^{\prime}_j = \hat{x}_0 + j \Delta x ~~ \mbox{ for } j= -N/2, \ldots N/2 -1. 
              \nonumber 
\end{eqnarray}
Then the  integral in (\ref{conv_1}) can be approximated by a quadrature rule with weights $w_\ell$ giving
\begin{eqnarray}
    \int_{x_{\min}}^{x_{\max}} v(x^{\prime}, \tau )  e^{-2 \pi i \omega_k x^{\prime}} ~dx^{\prime} 
           &\simeq&
               \sum_{\ell = -N/2}^{N/2-1}  
                       w_\ell ~ v( x^{\prime}_\ell, \tau) e^{- 2 \pi i \frac{k}{P} x^{\prime}_\ell }  \Delta x \nonumber \\
            & = & P   e^{- 2 \pi i \frac{k}{P} \hat{x}_0 } V( \omega_k, \tau) 
                \label{conv_3}
\end{eqnarray}
where 
\begin{eqnarray}
      V( \omega_k, \tau) & = & \frac{1}{N} \sum_{\ell = -N/2}^{N/2-1} 
       w_\ell ~ v( x^{\prime}_\ell, \tau) e^{- 2 \pi i k \ell/N}
        \label{conv_4}
\end{eqnarray}
is the DFT of  $\{ w_j ~v(x^{\prime}_j, \tau )\}$. 
{\color{black}
I the following,
we will consider two cases for the weights $w_\ell$:
the trapezoidal rule and Simpson's quadrature.}
Using equations (\ref{conv_3}) and (\ref{conv_4}) in equation (\ref{conv_1}),
and truncating the infinite sum to $k \in[ -N/2, N/2-1]$ then gives
{\color{black}
\begin{eqnarray}
    v(x_j, \tau + \Delta \tau) & \simeq & \frac{1}{P} \sum_{k=-N/2}^{N/2-1} e^{ 2 \pi i \frac{k}{P} \hat{x}_0}
                               G( \omega_k, \Delta \tau) e^{ 2 \pi i k j/N} P 
                              e^{- 2 \pi i \frac{k}{P} \hat{x}_0 } 
                               V( \omega_k, \tau)
                           \nonumber \\
              & = & \sum_{k=-N/2}^{N/2-1} G( \omega_k, \Delta \tau) V( \omega_k, \tau) e^{ 2 \pi i k j/N} .
                    \label{convolution_fourier}
\end{eqnarray}
}
Thus 
$\{v(x_j, \tau + \Delta \tau)\}$ 
is the inverse DFT of the product $\{G( \omega_k, \Delta \tau) \cdot V( \omega_k, \tau)\}$.

In summary, one can obtain a discrete set of values for the solution $v$ by first going to the Fourier domain by constructing its Fourier transform $V$ using a set of
quadrature weights and then returning to the physical domain by convolution of $V$ with the Fourier transform of the Green's function. The cost is then the cost of doing
a single FFT and iFFT.

There are four significant approximations in these steps. These include localization of the computational domain,
representation of the Green's function by a truncated Fourier series, a periodic extension of the solution and,
finally, approximation of the integral in equation (\ref{conv_1}) by a quadrature rule. The effect of the errors from 
these approximations has been  previously discussed in \citet{Lord2008} and we refer the reader there for details. 
 
\subsection{The FST/CONV Algorithms}

The FST and CONV algorithms are described using the previous approximations. 
Let $(v^{n})^+$ be the vector of solution values 
just after $\tau_n$ and $w_{quad}$ be the vector of quadrature weights 
\begin{eqnarray}
    (v^{n})^+ &=& [ v(x_{-N/2}, \tau_n^+), \ldots, v(x_{N/2-1}, \tau_n^+) ]~
                  \nonumber \\
     w_{quad} & = & [ w(x_{-N/2}), \ldots, w(x_{N/2-1}) ]~.
              \label{v_vec_def}
\end{eqnarray}
Furthermore let $\mathbb{I}_{\Delta x}(x)$, with  $x_k \leq x \leq x_{k+1} $, be a linear interpolation operator 
\begin{eqnarray}
  \mathbb{I}_{\Delta x}(x)  (v^{n})^+ & = & \theta \cdot v(x_k,  \tau_n^+) ~~+~~
                                            (1-\theta)\cdot v(x_{k+1},  \tau_n^+) 
                                       ~~\mbox{ with } ~~
                                          \theta = \frac{(x_{k+1} - x)}{\Delta x} .
                        \label{interp_def}
\end{eqnarray}
The full FST/CONV algorithm applied to a control problem
is illustrated in Algorithm \ref{alg_fst}.  We refer the reader to 
\citet{Lippa2013,Song2016,Kwok2018}
for applications in finance.

\begin{algorithm}[!h]
\caption{
\label{alg_fst}
FST/CONV Fourier method. $x \circ y$ is the Hadamard product of vectors $x,y$.
}
 \begin{algorithmic}[1]
    \REQUIRE ${G} = \{{G}( \omega_j, \Delta \tau) \}, ~ j=-N/2, \ldots, N/2-1$
    \STATE Input: number of timesteps $M$ and initial solution $(v^0)^-$
    \STATE $(v^0)^+ = \inf_c \mathcal{M}(c) \bigl( ~\mathbb{I}_{\Delta x} (x) (v^0)^- ~ \bigr)$
    \FOR[ Timestep loop]{$m=1,\ldots, M$}

    \STATE ${V}^{m-1} = FFT[~ w_{quad}  \circ (v^{m-1})^+ ~]~~~~~~~~~~~~~~~~~~~~~~~~~~~~~~~~~~~~~~~~~~~~~~~~$ \COMMENT{Frequency domain}
             \label{DFT_v_fst}
    \STATE $(v^{m})^- = iFFT[~ {V}^{m-1} \circ {G} ~]~~~~~~~~~~~~~~~~~~~~~~~~~~~~~~~~~~~~~~~~~~~~~~~~~~~~~~~~~~~$ \COMMENT{Physical domain}
                \label{IDFT_v_fst}
    \STATE $ v( x_j, \tau_{m}^+) = \inf_c \mathcal{M}(c)
              \bigl( ~\mathbb{I}_{\Delta x} (x_j) (v^{m})^- ~\bigr) ~;~ j=-\frac{N}{2}, \ldots, j=\frac{N}{2}-1 ~~~~~~~~~~$ \COMMENT{Optimal control}
                             \label{optimal_control_algo_fst}

   \ENDFOR
 \end{algorithmic}
\end{algorithm}

\begin{remark}
In \citep{Surkov2008}, the authors describe their FST method in slightly different terms. There they use a continuous Fourier transform to convert the PIDE into Fourier space.
The PIDE in physical space then reduces to a linear first-order differential equation in Fourier space which can be solved in closed-form as long as the characteristic function of the associated stochastic process is known in closed form (See Appendix A). In this way, the method is able to produce exact pricing results between monitoring dates (if any) of an option, using a continuous domain. In practice, using a discrete computational domain leads to approximations as a discrete Fourier transform is used to approximate the continuous Fourier transform.
\end{remark}

\section{An $\epsilon$-Monotone Fourier Method}\label{sec4}

Our monotone Fourier method proceeds in a similar fashion as in the previous 
section, but  is based on a slightly 
different philosophy.  We begin by discretizing the value function, 
and then generate a continuous approximation of the value function 
by assuming linear basis functions. 
Given this approximation,  we carry out an exact integration of the convolution integral. We can then truncate the series
approximation  of this integral so that monotonicity holds to within a tolerance (using a truncation parameter to keep track 
of the number of terms). Typically the convergence to the integral is  exponential in  the series truncation 
parameter, so it is inexpensive to specify a small tolerance. The key idea here is that  the number of terms required to accurately determine the projection of the Green's 
function onto a given set of linear basis  functions can be larger than the number of basis functions.

An additional important point is that, after the initial set-up cost, the complexity per  step is the same as the 
standard FST or CONV methods. This requires only a small change to existing FST or CONV 
codes in order to guarantee monotonicity. The desirable property of this method is that monotonicity can be guaranteed 
(to within a small tolerance) independent of the number of FST (CONV) grid nodes or timestep size.

\subsection{A Monotone Scheme}

We proceed as follows. As before we assume a localized computational domain
\begin{eqnarray}
 v(x, \tau + \Delta \tau) & = & \int_{x_{\min}}^{x_{\max}} 
           g( x - x^{\prime}, \Delta \tau) v( x^{\prime}, \tau)~dx^{\prime} ~
                 \label{mono_1}
\end{eqnarray}
and discretize this problem on the grid $\{x_j\}, \{x^{\prime}_j\}$ 
{\color{black}
\begin{eqnarray}
    x_j & = &   \hat{x}_0 + j \Delta x ~;~~~  x^{\prime}_j = \hat{x}_0 + j \Delta x ~;~~~ j= -\frac{N}{2}, \ldots ,\frac{N}{2} -1 \nonumber 
\end{eqnarray} }
where $P = x_{\max} - x_{\min}$ and $ \Delta x = \frac{P}{N}$ with $x_{\min} 
        =  \hat{x}_0  - \frac{N \Delta x}{2}$ and $x_{\max} =  \hat{x}_0  + \frac{N \Delta x}{2}$. 
Setting $v_j(\tau) = v(x_j, \tau)$ we can now represent the solution as a linear combination 
\begin{eqnarray}
   v(x, \tau) &\simeq & \sum_{j=-N/2}^{N/2 -1} \phi_j (x) ~v(x_j, \tau) 
                =   \sum_{j=-N/2}^{N/2 -1} \phi_j (x) ~v_j(\tau)~.
                \label{mono_3}
\end{eqnarray}
{\color{black}
where the $\phi_j$ are piecewise linear basis functions, that is, 
\begin{eqnarray}
      \phi_j(x) & = & \begin{cases}
                         \frac{(x_{j+1} - x)}{\Delta x} & x_j \leq x \leq x_{j+1} \\
                         \frac{(x -x_{j-1}) }{ \Delta x} & x_{j-1} \leq x \leq {x_j} \\
                         0 & {\mbox{ otherwise}}~~~~~.
                      \end{cases} 
     \label{basisfunctions}
\end{eqnarray}
}
Substituting representation (\ref{mono_3}) into equation (\ref{mono_1}) gives
\begin{eqnarray}
   v_k (\tau + \Delta \tau)  & = & \int_{x_{\min}}^{x_{\max}} 
           g( x_k - x, \Delta \tau) ~v( x, \tau)~dx \nonumber \\
          & = & \sum_{j= -N/2}^{N/2-1} v_j(\tau)  \int_{x_{\min}}^{x_{\max}} \phi_j(x) ~
                          g( x_k - x, \Delta \tau) ~dx \nonumber \\
             & = & \sum_{j= -N/2}^{N/2-1} v_j(\tau) ~\widetilde{g} ( x_k - x_j, \Delta \tau) \Delta x
          \label{mono_conv} ~,
\end{eqnarray}
where 
\begin{eqnarray}
                         \nonumber \\
             \widetilde{g} ( x_k - x_j, \Delta \tau) 
          & = &  \frac{1}{\Delta x}  \int_{x_k - x_j - \Delta x}^{ x_k - x_j + \Delta x} 
                                            \phi_{k-j}( x ) ~ g( x, \Delta \tau) ~dx
                   ~. \label{mono_4}
\end{eqnarray}
Here we have used the fact that $\phi_j( x_k - x) =\phi_{k-j}( x )$, a property  
which follows from the properties of linear basis functions. Setting 
$
\ell~=~k~-~j$, $ y_\ell~=~x_k - x_j = \ell \Delta x$ for $ \ell = -\frac{N}{2}, \ldots ,\frac{N}{2} -1 
$
gives
\begin{eqnarray}
   \widetilde{g}( y_\ell, \Delta \tau) & = & \frac{1}{\Delta x} \int_{y_\ell - \Delta x}^{y_\ell  + \Delta x} 
                                  \phi_\ell(  x) ~ g(  x, \Delta \tau) ~dx
               ~ \label{mono_5}
\end{eqnarray}
as the averaged projection
of the Green's function onto the basis functions $\phi_\ell$.
Note that for this projection $\widetilde{g}( y_\ell, \Delta \tau) \geq 0$ 
since the exact Green's function has $g(x) \geq 0$ for all $x$, and of course $\phi_\ell(x) \geq 0$.   
Therefore the scheme  (\ref{mono_conv}) is monotone
for any $N$.

{\color{black}
\begin{remark}[Green's function available in closed form]
If the Green's function is available in closed form, rather
than just its Fourier Transform, then equation (\ref{mono_5}) can
be used to directly compute the $\widetilde{g}( y_\ell, \Delta \tau)$ terms,
as, for example, in \citet{Antti2003}.  However, in general
this will require a numerical integration.  If the Fourier
transform of the Green's function is known, we will 
derive a technique to efficiently compute $\widetilde{g}( y_\ell, \Delta \tau)$
to an arbitrary level of accuracy.
\end{remark}}

\subsection{Approximating the Monotone Scheme}

The scheme (\ref{mono_conv}) is monotone since the weights $\widetilde{g}( y_\ell, \Delta \tau)$ given in (\ref{mono_5}) are nonnegative. However it is only possible for us to approximate these weights and this prevents us from guaranteeing monotonicity. In this subsection we show how we overcome this issue. 

Recall that our starting point is that $G$, the Fourier series of the Green's function, 
is known in closed form. We have then replaced our Green's function 
$g(x, \Delta \tau)$ by its localized, periodic approximation
\begin{eqnarray}
\hat{g}(x, \Delta \tau)  & = &  
      \frac{1}{P} \sum_{k=-\infty}^{\infty}  
             e^{ 2 \pi i \omega_k x} G(\omega_k, \Delta \tau)~~\mbox{ where }~ \omega_k = \frac{k}{P}~\mbox{ and }  ~
   ~ P = x_{\max} - x_{\min}
    \nonumber  \label{finite_g} 
\end{eqnarray}
and then projected   the Green's function onto the linear basis functions.
{\color{black}
Replacing $g(x, \Delta \tau)$ by $\hat{g}(x, \Delta \tau)$ in equation (\ref{mono_5}),
and assuming uniform convergence of the Fourier series (see Appendix),
we integrate equation (\ref{mono_5}) term by term resulting in
\begin{eqnarray}
\widetilde{g}( y_j, \Delta \tau) & = & 
                                           \frac{1}{P} \displaystyle \sum_{k=-\infty}^{\infty}
                                          \biggl( 
                                             \frac{1}{\Delta x} \int_{y_j - \Delta x}^{y_j  + \Delta x}
                                                 e^{ 2 \pi i \omega_k x} \phi_j(x) ~dx                                      
                                           \biggr) 
                                          G(\omega_k, \Delta \tau) . \label{integral}
\end{eqnarray}
In the case of linear basis functions (\ref{basisfunctions})
we convert the complex exponential in (\ref{integral}) into trigonometric functions with the resulting integration giving\footnote{For $\omega_k = 0$, we take the limit $\omega_k \rightarrow 0$.}
\begin{eqnarray}
   \widetilde{g}( y_j, \Delta \tau) & = & 
                                           \frac{1}{P} \displaystyle \sum_{k=-\infty}^{\infty}
                                                     e^{ 2 \pi i \omega_k  y_j} 
                                          \biggl( \frac{ \sin^2 \pi \omega_k \Delta x}
                                                         { ( \pi \omega_k \Delta x)^2}
                                           \biggr) G(\omega_k, \Delta \tau) .
           \label{g_tilde}
\end{eqnarray}}
This is then approximated by truncating the series.  

A key point is that  the truncation of 
the projection of the Green's function 
does not have to use the same number of terms as the number
of basis functions.
That is, set $N^{\prime} = \alpha N$, with $N$ defined in equation (\ref{mono_3}) and 
$ \alpha  = 2^k$ for $ k=1, 2, \ldots$. Suppose we now truncate the 
Fourier series for the projected linear basis form for $\widetilde{g}$ 
to $N^{\prime}$ terms. Let $\widetilde{g}( y_k, \Delta \tau, \alpha)$, 
denote the use of a truncated Fourier
series with truncation parameter $\alpha$ for a fixed value of $N$ so  
that the Fourier series (\ref{g_tilde}) truncates to
\begin{eqnarray}
  \widetilde{g}( y_j, \Delta \tau, \alpha) =
                   \frac{1}{P} \displaystyle \sum_{k=-\alpha N/2}^{\alpha N/2 -1}
                                                     e^{ 2 \pi i \omega_k  j \Delta x} 
                                          \biggl( \frac{ \sin^2 \pi \omega_k \Delta x}
                                                         { ( \pi \omega_k \Delta x)^2}
                                           \biggr)  G(\omega_k, \Delta \tau)~.
            \label{g_truncated}
\end{eqnarray}
Using the notation $\widetilde{g}_{j} (\Delta \tau, \alpha) = \widetilde{g}( y_j, \Delta \tau, \alpha) $, then 
\begin{eqnarray}
     \widetilde{g}_{j+N} (\Delta \tau, \alpha)  = \widetilde{g}_{j} (\Delta \tau, \alpha) \nonumber
\end{eqnarray}
so that our sequence $\{\widetilde{g}_{-N/2} (\Delta \tau, \alpha), \ldots, \widetilde{g}_{N/2-1} (\Delta \tau, \alpha)\}$ is periodic.

{\color{black}
\begin{remark}[Efficient computation of the projections]
It remains to compute the projections. For this one needs to determine the discrete convolution (\ref{g_truncated}). Let
\begin{eqnarray}
   Y_k & = & \biggl( \frac{ \sin^2 \pi \omega_k \Delta x}
                                                         { ( \pi \omega_k \Delta x)^2}
                                           \biggr) G(\omega_k, \Delta \tau)  ~;~
                  k = -\frac{\alpha N}{2}, \ldots, \frac{\alpha N}{2} -1. \nonumber
\end{eqnarray}
Then rewriting  
$e^{ 2 \pi i \omega_k  j \Delta x}  
                                       =  e^{2 \pi i k \ell /( \alpha N)}$  with $ \ell = j\alpha$
and defining
\begin{eqnarray}
  \mathbb{Y}_{\ell} & = & \frac{1}{P} \sum_{k=-\alpha N/2}^{\alpha N/2 -1} 
                            e^{2 \pi i k \ell /( \alpha N)} Y_k~; ~ \ell = -\frac{\alpha N}{2}, \ldots, \frac{\alpha N}{2} -1
             \label{eff_3}
\end{eqnarray}
gives $\{\mathbb{Y}_{\ell} \}$ as the DFT of the $\{Y_k\}$ (of size $ N^{\prime} = \alpha N$).
Consequently, using equations (\ref{g_truncated}) and (\ref{eff_3})
\begin{eqnarray}
    \widetilde{g}( y_j, \Delta \tau, \alpha) =  \mathbb{Y}_{\ell} ~;~ \ell = j \alpha ~;~
                                                      j = -N/2, \ldots, N/2-1  ~.
      \label{setup_1}
\end{eqnarray}
Thus the projections  $\{\widetilde{g}( y_j, \Delta \tau, \alpha)\} $ are computed 
via a single FFT of size $N^{\prime}$.
\end{remark}}

\bigskip
For $k=-\frac{N}{2}, \ldots, \frac{N}{2} -1$ we define $\widetilde{G}(\omega_k, \Delta \tau, \alpha)$ as the DFT of the $\{\widetilde{g}(y_m, \Delta \tau, \alpha)\}$
\begin{eqnarray}
    \widetilde{G}(\omega_k, \Delta \tau, \alpha) & = & \frac{P}{N}
                     \sum_{m=-N/2}^{N/2-1} e^{- 2 \pi i m k/N } ~ \widetilde{g}(y_m, \Delta \tau, \alpha).
        \label{smooth_G_5}
\end{eqnarray}
Note that
\begin{eqnarray}
   \Delta x = \frac{P}{N} > \frac{P}{N^{\prime}} = \frac{P}{ \alpha N} \nonumber
\end{eqnarray}
that is, the  basis function is integrated  over a grid of size $\Delta x > P/N^{\prime}$, and so is
larger than the grid spacing on the $N^{\prime}$ grid.
As $\alpha \rightarrow \infty$, there is no error in evaluating
these integrals (projections) for a fixed value of $N$.  For any finite
$\alpha$, there is an error due to the use of a truncated Fourier Series.

Again, we emphasize that the truncation for the
Fourier series representation of the projection of the Green's function
in (\ref{g_truncated}) does not have to use the
same number of terms ($\alpha N$) as used in the discrete
convolution ($N$).  Instead we can take a very accurate expansion
of the Green's function projection and then translate this
back to the coarse grid using (\ref{smooth_G_5}).
There is no further loss of information in this last step.
As remarked above, we only use the Fourier representation
of $\widetilde{g}( y_j, \Delta \tau, \alpha)$ to carry
out the discrete convolution, that is a dense matrix vector multiply,
efficiently.  The discrete
convolution in Fourier space is exactly equivalent to the
discrete convolution in physical space, assuming periodic
extensions.

{\color{black}
\begin{remark}[Assumption \ref{green_assump_a} revisited]
The assumption that $g(x, x^{\prime}, \Delta \tau) =  g( x - x^{\prime}, \Delta \tau) $ will permit
fast computation of a dense matrix-vector multiply using
an FFT. As mentioned earlier this assumption  holds for
L\'evy processes, but does not hold, for example, for a Heston stochastic
volatility model.  
However the basic idea of projection of the
Green's function onto linear basis functions can be used even if Assumption \ref{green_assump_a}
does not hold. The price, in this case, is 
a loss of computational efficiency.
As an example, in the case of the Heston stochastic volatility model, one has a closed form for the
characteristic function but here
the Green's function has the form $g = g( \nu, \nu^{\prime}, x - x^{\prime}, \Delta \tau)$
where $\nu$ is the variance and $x = \log S$, where $S$ is the asset price.
In this case, we can use an FFT effectively in the $x$ direction, but not in the $\nu$
direction.
\end{remark}
}

\begin{remark}[Relation to the COS method]
In the COS method, the solution $v(x,\tau)$ is also expanded in a Fourier
series. This gives exponential convergence of the entire algorithm for
smooth $v(x,\tau)$ which in turn requires that we have a highly accurate
Fourier representation of $v(x,\tau)$.  However, suppose $v(x,\tau)$ is
obtained from applying an impulse control using a numerical
optimization method at discrete points on a previous step, using 
linear interpolation (the only interpolation method
which is monotone in general). In that case we will not have
an accurate representation of the Fourier series
of $v(x,\tau)$.   In addition,  it does not seem possible to ensure
monotonicity for the COS method. 
So far, we have only assumed that
the $v(x, \tau)$ can be expanded in terms of piecewise linear basis
functions.  This property can be used to guarantee monotonicity.
However convergence will be slower than the COS method if the solution is smooth.
\end{remark}

\begin{remark}[Piecewise constant basis functions]
The equations and previous discussion in this section also holds if our basis functions are piecewise constant functions, that is, basis functions $\phi_j$ which are nonzero over $[x_j - \Delta x/2, x_j + \Delta x/2]$. 
{\color{black}
In this case computing the integral in (\ref{integral}) gives
\begin{eqnarray}
   \widetilde{g}( y_j, \Delta \tau) & = & 
                                            \frac{1}{P} \displaystyle \sum_{k=-\infty}^{\infty}
                                                     e^{ 2 \pi i \omega_k  y_j} 
                                          \biggl( \frac{ \sin \pi \omega_k \Delta x}
                                                         {  \pi \omega_k \Delta x}
                                           \biggr)  G(\omega_k, \Delta \tau) 
          \label{g_tilde2}
\end{eqnarray}}
with the subsequent equations also requiring slight modifications.
\end{remark}

\subsection{Computing the Monotone Scheme}

In order to ensure our monotone approach is effective,  it remains to compute the discrete convolution (\ref{mono_conv}) efficiently.
For the DFT pair for $v_j( \tau)$ and $V( \omega_p, \tau)$, 
we recall that $x_j = \hat{x}_0 + j \Delta x$ 
and so
\begin{eqnarray}
  v_j (\tau) &=& ~~\sum_{\ell = -N/2}^{ N/2-1} V( \omega_\ell, \tau) e^{2 \pi i \omega_l x_j}
                            ~~~=~~  \sum_{\ell = -N/2}^{N/2-1} \biggl( e^{2 \pi i \omega_\ell \hat{x}_0} 
                                                     \biggr)
                           V( \omega_\ell, \tau) e^{2 \pi i j \ell/N} \nonumber \\
        V( \omega_p, \tau) & = & ~~\frac{1}{N} \sum_{\ell = -N/2}^{N/2-1} e^{- 2 \pi i \omega_p x_\ell}
                                           v_\ell( \tau) 
                            ~~= ~~ \frac{1}{N}  \biggl( e^{ - 2 \pi i \omega_p \hat{x}_0} 
                                                     \biggr)
                                            \sum_{\ell = -N/2}^{ N/2-1} e^{- 2 \pi i p \ell /N}
                              v_\ell( \tau)
         ~.
                  \label{DFT_pair_2}
\end{eqnarray}
%

Suppose we write $\widetilde{g}( x_k - x_j, \Delta \tau)$ as a DFT
\begin{eqnarray}
   \widetilde{g}_{k-j} (\Delta \tau, \alpha) 
     & = & \frac{1}{P} \sum_{p= -N/2}^{N/2-1} \widetilde{G}(\omega_p , \Delta \tau, \alpha) e^{2 \pi i (k-j)p/N}
                    ~,
                  \label{Fourier_G}
\end{eqnarray}
where we use equation (\ref{smooth_G_5}) to determine $\widetilde{G}(\omega_p , \Delta \tau, \alpha)$.
Substituting equations (\ref{Fourier_G}) and (\ref{DFT_pair_2}) into equation (\ref{mono_conv}) we then get
\begin{eqnarray}
 v( x_k, \tau + \Delta \tau) &= & \Delta x \sum_{j=-N/2}^{N/2-1} v_j^n ~
     \widetilde{g}( x_k - x_j , \Delta \tau, \alpha) \nonumber \\
     & = & \frac{1}{N} \sum_{p= -N/2}^{N/2-1}  ~~
              \sum_{\ell = -N/2}^{N/2-1}  \biggl( e^{2 \pi i \omega_\ell x_{0}} \biggr) 
                   \widetilde{G}(\omega_p, \Delta \tau, \alpha) V( \omega_\ell, \tau)
                                   e^{2 \pi i   k p/N} \sum_{j= -N/2}^{N/2-1} 
                     e^{2 \pi i  j (\ell-p)/N}  \nonumber \\
     & = & \sum_{p= -N/2}^{N/2-1} 
                   \biggl( e^{2 \pi i \omega_p x_{0}} \biggr) V(\omega_p, \tau) 
                          \widetilde{G}(\omega_p, \Delta \tau, \alpha) 
              e^{2 \pi i k p/N} 
      \label{dis_conv_2}
\end{eqnarray}
where the last equation follows from the classical orthogonality properties of $N^{th}$ roots of unity.

From equation (\ref{DFT_pair_2}) we have
\begin{eqnarray}
        V( \omega_p, \tau) 
                           & = & \frac{1}{N}  \biggl( e^{ - 2 \pi i \omega_p \hat{x}_0} 
                                                     \biggr)
                                            \sum_{\ell = -N/2}^{N/2-1} e^{- 2 \pi i p \ell /N}
                                   v_\ell( \tau) 
                  =  \biggl( e^{ - 2 \pi i \omega_p \hat{x}_0} 
                                                     \biggr) \widetilde{V}( \omega_p, \tau) 
                                \label{mono_conv_2}
\end{eqnarray}
with
\begin{eqnarray} 
                             \widetilde{V}( \omega_p, \tau) & = &  \frac{1}{N} 
                                           \sum_{\ell = -N/2}^{N/2-1} e^{- 2 \pi i p \ell /N}
                                   v_\ell( \tau)   \nonumber     
\end{eqnarray}
the DFT of $\{v_\ell(\tau)\}$. 
Finally substituting equation (\ref{mono_conv_2}) into (\ref{dis_conv_2}) gives
\begin{eqnarray}
   v( x_k, \tau + \Delta \tau)  =   \sum_{p=-N/2}^{N/2-1} \widetilde{V}( \omega_p, \tau) ~
                                                             \widetilde{G}( \omega_p, \Delta \tau, \alpha)
                                            ~e^{ 2 \pi i p \ell /N}
   ~,
         \label{final_mono_conv} 
\end{eqnarray}
which we recognize as the inverse DFT of $\{\widetilde{V}( \omega_p, \tau) ~
                                                             \widetilde{G}( \omega_p, \Delta \tau, \alpha) \}$.
\begin{remark}[Monotonicity]
Equations (\ref{final_mono_conv}) and (\ref{mono_conv}) are algebraic identities (assuming periodic extensions). Hence if we use  (\ref{final_mono_conv}) to advance the solution, then 
this is algebraically identical to using  (\ref{mono_conv}) to advance the solution.  Thus we can analyze the properties of equation (\ref{final_mono_conv})
by analyzing equation (\ref{mono_conv}).  In particular, if  $~~\widetilde{g}(x_k, \Delta \tau, \alpha)~\geq~0$ then the scheme is monotone.
\end{remark}

\begin{remark}[Converting FST or CONV to monotone form]
Equation (\ref{final_mono_conv}) is formally identical
with equation (\ref{convolution_fourier}).  This has the practical
result that any FST or CONV software can be converted to
monotone form by a preprocessing step which computes 
$\widetilde{G}( \omega_p, \Delta \tau, \alpha)$,
and choosing a trapezoidal rule for the integral in equation (\ref{conv_3}).
\end{remark}


\section{Monotone algorithm for solution of the control problem}\label{sec5}

In this section we describe our monotone algorithm for the control problem (\ref{PIDE_1} - \ref{intervention_1}).
Let $(v^{n})^+$ be the vector of values of our solution just after ${\tau_n}$ as defined earlier in equation (\ref{v_vec_def}) 
and $\mathbb{I}_{\Delta x}(x)$ the linear interpolation operator defined as in equation (\ref{interp_def}). Let
\begin{eqnarray}
\widetilde{V}^n & = & [ \widetilde{V}(\omega_{-N/2}, \tau_n), 
               \ldots, \widetilde{V}(\omega_{N/2-1}, \tau_n) ]
     ~ = ~ DFT[ ~ (v^{n})^- ~ ] ~ \nonumber
\end{eqnarray}
and
\begin{eqnarray}
   \widetilde{G} = [ \widetilde{G}( \omega_{-N/2}, \Delta \tau, \alpha), \ldots,
                        \widetilde{G}( \omega_{N/2-1}, \Delta \tau, \alpha) ]~. \nonumber
\end{eqnarray}
  
Let us assume that our Green's function is not an explicit function of $\tau$ but rather we have
$g~=~g( x-x^{\prime}, \Delta \tau)$ and that the time steps are all constant, that is, $\tau_{n+1} - \tau_n = \Delta \tau = const$.
%
In this case we can  compute $\widetilde{G}( \omega_k, \Delta \tau, \alpha)$ only once.  If these two assumptions do 
not hold, then $\widetilde{G}( \cdot )$ would have to be recomputed frequently and hence our algorithm for ensuring monotonicity becomes more costly. 

Algorithm \ref{alg_start} describes the 
computation of $\widetilde{G}( \cdot )$. Here we test for monotonicity (up to a small tolerance) by minimizing the effect of any negative weights which are determined via
$$
\sum_j \Delta x | \min( \widetilde{g}( y_j, \Delta \tau, \alpha) , 0)|  < \epsilon_1 \frac{\Delta \tau }{T} .
$$
The test for accuracy of the projection occurs by the comparison
$$
\max_j \Delta x | \widetilde{g}( y_j, \Delta \tau, \alpha) - 
                              \widetilde{g}( y_j, \Delta \tau, \alpha/2)| < \epsilon_2 .
$$
Both monotonicity and convergence tests 
are scaled by $\Delta x$ so that these quantities are bounded as 
$\Delta \tau \rightarrow 0, \forall \Delta x$ 
(the Green's function becomes unbounded as 
$\Delta \tau \rightarrow 0$, but the integral of 
the Green's function is bounded by unity).  In addition, 
the monotonicity test scales $\epsilon_1$ by $\Delta \tau / T$ in 
order to eliminate the number of timesteps from our monotonicity bounds. This is also discussed further in Section \ref{mono_proof}.

\begin{algorithm}[!h]
\caption{
\label{alg_start}
Initialization of the monotone Fourier method.
}
 \begin{algorithmic}[1]
    \REQUIRE Closed form expression for $G( \omega, \Delta \tau)$, the Fourier transform of the Green's Function
    \STATE Input: $N, \Delta x, \Delta \tau$
    \STATE Let $\alpha = 1$ and compute $\widetilde{g}( y_j, \Delta \tau, 1)$ .
    \FOR[Construct accurate $\widetilde{g}$]{$\alpha = 2^k; k=1,2, \ldots$ until convergence}

    \STATE Compute $\widetilde{g}( y_j, \Delta \tau, \alpha),~
                          \widetilde{G}( \omega_j, \Delta \tau, \alpha) , 
                      ~ j=-\frac{N}{2}, \ldots, \frac{N}{2}-1$ using (\ref{setup_1})-(\ref{smooth_G_5})
    \STATE $test_1 = \displaystyle \sum_j \Delta x \min( \widetilde{g}( y_j, \Delta \tau, \alpha) , 0)$$~~~~~~~~~~~~~~~~~~~~~~~~~~~~~~~~~~~~~~~~~~~~~~~\{ \mbox{ Monotonicity test }\}$\label{test_1}
    \STATE $test_2 = \displaystyle \max_j \Delta x | \widetilde{g}( y_j, \Delta \tau, \alpha) - 
                              \widetilde{g}( y_j, \Delta \tau, \alpha/2)|$ 
                    $~~~~~~~~~~~~~~~~~~~~~~~~~~~~~~~~~~~~~~~~~\{ ~\mbox{ Accuracy test }\}$
              \label{test_2}
    \IF { ($|test_1| <  \epsilon_1 (\Delta \tau /T) ) ~ {\mbox{and}} ~  ( test_2  < \epsilon_2)$ }  
           \label{if_algo1}
            \STATE break from for loop $~~~~~~~~~~~~~~~~~~~~~~~~~~~~~~~~~~~~~~~~~~~~~~~~~~~~~~~~~~~~~~~~\{ \mbox{ Convergence test }\}$
    \ENDIF
   \ENDFOR \COMMENT{End accurate $\widetilde{g}$ loop}
   \STATE Output: Weights $\widetilde{G}( \omega_j, \Delta \tau, \alpha), ~ j=-\frac{N}{2}, \ldots, \frac{N}{2}-1$ in Fourier domain.
 \end{algorithmic}
\end{algorithm}
In Algorithm \ref{alg_start}, the test on line \ref{test_1} will ensure that monotonicity
holds to a user specified tolerance and the test on line \ref{test_2} 
ensures accuracy of the projections. 
The complete monotone algorithm for the control problem is given in Algorithm \ref{alg_monotone}. 

\begin{remark}[Convergence of Algorithm \ref{alg_start}.]
In Appendix \ref{projection_convergence} we show that for typical
Green's functions,  the test for monotonicity on line \ref{test_1} in
Algorithm \ref{alg_start}  and the accuracy test on line \ref{test_2}
are usually satisfied 
for $\alpha =2, 4$, for typical values of $\epsilon_1, \epsilon_2$.  
\end{remark}

\begin{remark}[Complexity]
The complexity of using (\ref{final_mono_conv}) to advance the time (excluding the cost of determining an optimal control) is $O(N \log N)$ operations, roughly the same as the usual FST/CONV methods.
\end{remark}

\begin{algorithm}[!h]
\caption{
\label{alg_monotone}
Monotone Fourier method.
}
 \begin{algorithmic}[1]
    \REQUIRE Weights $\widetilde{G} = \{\widetilde{G}( \omega_j, \Delta \tau, \alpha) \}, \mbox{ for } ~ j=-\frac{N}{2}, \ldots, \frac{N}{2}-1$ in Fourier domain 
               (from Algor.~\ref{alg_start}).
    \STATE Input: number of timesteps $M$, initial solution $(v^0)^-$
    \STATE $(v^0)^+ = \inf_c \mathcal{M}(c) \bigl( ~\mathbb{I}_{\Delta x} (x) (v^0)^- ~ \bigr)$
    \FOR[Timestep loop]{$m=1,\ldots, M$}

    \STATE $\widetilde{V}^{m-1} = FFT[~ (v^{m-1})^+ ~]~~~~~~~~~~~~~~~~~~~~~~~~~~~~~~~~~~~~~~~~~~~~~~~~~~~~~~~~~~~~~$\COMMENT{Frequency  domain}
             \label{DFT_v}
    \STATE $(v^{m})^- = iFFT[~ \widetilde{V}^{m-1} \circ \widetilde{G} ~]~~~~~~~~~~~~~~~~~~~~~~~~~~~~~~~~~~~~~~~~~~~~~~~~~~~~~~~~~~$ \COMMENT{Physical  domain}
                \label{IDFT_v}
    \STATE $ v( x_j, \tau_{m}^+) = \inf_c \mathcal{M}(c) 
              \bigl( ~\mathbb{I}_{\Delta x} (x_j) (v^{m})^- ~\bigr) ~;~ j=-\frac{N}{2}, \ldots, \frac{N}{2}-1 ~~~~~~~~~~~~~~~~$ \COMMENT{Optimal control}
                             \label{optimal_control_algo}

   \ENDFOR \COMMENT{End timestep loop}
 \end{algorithmic}
\end{algorithm}

\section{Properties of the Monotone Fourier Method}\label{mono_proof}

In this section we prove a number of properties satisfied by 
our $\epsilon$-monotone Fourer algorithm. The main properties 
include  $\ell_{\infty}$ stability and a type of $\epsilon$-discrete comparison principle.

\begin{lemma}\label{sum_g}
Let $C_1$ be a constant such that the exact Green's function satisfies $C_1 = \int_{\mathbb{R}} g(x, \Delta \tau)~dx $. Then for all $k$
\begin{eqnarray}
   \Delta x \sum_{j=-N/2}^{N/2-1} 
        \widetilde{g}( x_k - x_j, \Delta x, \alpha)
     & = & C_1~~ \mbox{ with } ~~\Delta x = \frac{P}{N} ~. \nonumber 
\end{eqnarray}
\end{lemma}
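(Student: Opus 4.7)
\medskip

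\noindent\textbf{Proof plan.} The plan is to first reduce the claim to a $k$-independent sum by exploiting the periodicity of $\widetilde g$, and then evaluate that sum directly using the Fourier representation (\ref{g_truncated}) by interchanging the two summations and observing that all but one term collapses.

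\smallskip

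First, I would note that the sequence $\{\widetilde g_j(\Delta\tau,\alpha)\}$ satisfies $\widetilde g_{j+N}(\Delta\tau,\alpha) = \widetilde g_j(\Delta\tau,\alpha)$, as stated just after (\ref{g_truncated}). Writing $\widetilde g(x_k - x_j,\Delta\tau,\alpha) = \widetilde g_{k-j}(\Delta\tau,\alpha)$ and reindexing $\ell = k-j \pmod N$, one has
\begin{equation*}
\Delta x \sum_{j=-N/2}^{N/2-1} \widetilde g(x_k-x_j,\Delta\tau,\alpha)
\;=\; \Delta x \sum_{\ell=-N/2}^{N/2-1} \widetilde g(y_\ell,\Delta\tau,\alpha),
\end{equation*}
which is manifestly independent of $k$. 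This is the first step and already reduces the statement to proving the single identity $\Delta x \sum_\ell \widetilde g(y_\ell,\Delta\tau,\alpha)=C_1$.

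\smallskip

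Next, I would substitute formula (\ref{g_truncated}) for $\widetilde g(y_\ell,\Delta\tau,\alpha)$ and interchange the finite sums. With $\omega_m = m/P$ and $y_\ell = \ell\Delta x = \ell P/N$, one has $2\pi i \omega_m y_\ell = 2\pi i m\ell /N$, so
\begin{equation*}
\Delta x \sum_{\ell=-N/2}^{N/2-1} \widetilde g(y_\ell,\Delta\tau,\alpha)
\;=\; \frac{\Delta x}{P}\sum_{m=-\alpha N/2}^{\alpha N/2-1} \left(\frac{\sin^2 \pi \omega_m \Delta x}{(\pi\omega_m\Delta x)^2}\right) G(\omega_m,\Delta\tau) \sum_{\ell=-N/2}^{N/2-1} e^{2\pi i m\ell/N}.
\end{equation*}
The inner geometric sum equals $N$ when $m$ is an integer multiple of $N$ and vanishes otherwise, so only the frequencies $m = pN$ with $p\in\{-\alpha/2,\dots,\alpha/2-1\}$ contribute. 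For such $m$, $\pi\omega_m\Delta x = \pi p$, and $\sin^2(\pi p)=0$ for every nonzero integer $p$, while the sinc-squared factor has limit $1$ as $p\to 0$. Hence every surviving term with $p\neq 0$ still vanishes, and only the $p=0$ (DC) term remains.

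\smallskip

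Collecting this last surviving term gives
\begin{equation*}
\Delta x \sum_{\ell=-N/2}^{N/2-1}\widetilde g(y_\ell,\Delta\tau,\alpha) \;=\; \frac{\Delta x}{P}\cdot N\cdot G(0,\Delta\tau) \;=\; G(0,\Delta\tau),
\end{equation*}
since $\Delta x\cdot N = P$. Finally, from the Fourier transform definition (\ref{FT_b}),
\begin{equation*}
G(0,\Delta\tau) = \int_{-\infty}^{\infty} g(x,\Delta\tau)\,dx = C_1,
\end{equation*}
which closes the argument. There is no real obstacle here: the only point requiring care is bookkeeping the ranges of the two summation variables and noticing the double collapse (orthogonality of roots of unity eliminates all but the $m=pN$ frequencies, and the $\mathrm{sinc}^2$ factor then eliminates all but $p=0$). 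The remark that $C_1 \le 1$ in (\ref{green2}) is not needed for the proof; we only use the identity $C_1 = G(0,\Delta\tau)$.
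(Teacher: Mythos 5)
Your proof is correct and follows essentially the same route as the paper's: reindex using periodicity to remove the $k$-dependence, substitute the truncated Fourier representation (\ref{g_truncated}), interchange the finite sums, and apply orthogonality of the $N$th roots of unity to reduce everything to $G(0,\Delta\tau)=C_1$. Your explicit treatment of the aliased frequencies $m=pN$ with $p\neq 0$, which are killed by the $\sin^2(\pi p)$ factor, is a detail the paper's proof passes over silently, but it is the same argument.
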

\begin{proof}
\begin{eqnarray}
   \Delta x \sum_{j=-N/2}^{N/2-1} \widetilde{g}( x_k - x_j, \Delta x, \alpha)
    & = & \frac{P}{N} \sum_{\ell=-N/2}^{N/2-1} 
                 \widetilde{g}(y_\ell, \Delta \tau, \alpha) ~;~\mbox{ where }  y_\ell = x_k - x_j~, ~\ell = k - j  \nonumber \\
    & = & \frac{P}{N} \sum_{\ell=-N/2}^{N/2-1} \frac{1}{P}
                  \sum_{k=-\alpha N/2}^{\alpha N/2-1}
                        e^{ 2 \pi i \omega_k   \ell \Delta x}
                                          \biggl( \frac{ \sin^2 \pi \omega_k \Delta x}
                                                         { ( \pi \omega_k \Delta x)^2}
                                           \biggr) G(\omega_k, \Delta \tau) \nonumber \\
      & = & \frac{1}{N} \sum_{k=-\alpha N/2}^{\alpha N/2-1} 
                       \biggl( \frac{ \sin^2 \pi \omega_k \Delta x}
                                                         { ( \pi \omega_k \Delta x)^2}
                                           \biggr) G(\omega_k, \Delta \tau) 
                  \sum_{\ell=-N/2}^{N/2-1}  e^{2 \pi i \ell k/N} ~ \nonumber \\
     & = & ~ G( 0 , \Delta \tau) ~=~ \int_{-\infty}^{\infty} g(x, \Delta \tau) ~dx  =  C_1 ~. \nonumber
\end{eqnarray}
\end{proof}

\begin{theorem}[$\ell_{\infty}$ stability]\label{stab_lemma}
Assume that  $\widetilde{G}$ is computed using Algorithm \ref{alg_start}, 
that $(v^{n})^- $ is computed
from
\begin{eqnarray}
   (v_k^{n})^- & = &  \sum_{j=-N/2}^{N/2-1}  \Delta x ~ \widetilde{g}_{k-j} (v_j^{n-1})^+
  ~,  \label{eq_inf_1}
\end{eqnarray}
and that
\begin{eqnarray}
   \| (v^{n})^+ \|_{\infty} & \leq & \| (v^{n})^- \|_{\infty} ~.  \label{stab_1}
\end{eqnarray} 
Then for every $0 \leq  n \leq M$ we have
\begin{eqnarray}
  \| (v^{n})^+ \|_{\infty} \leq  C_2 = e^{2  \epsilon_1 } \| (v^0)^- \|_{\infty}   ~. \nonumber
\end{eqnarray}
\end{theorem}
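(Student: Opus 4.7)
The plan is to derive a one-step $\ell_\infty$ bound of the form $\|(v^n)^-\|_\infty \le (1 + 2\epsilon_1 \Delta\tau/T)\|(v^{n-1})^+\|_\infty$ and then iterate using the hypothesis (\ref{stab_1}), finishing with the standard inequality $(1+x)^M \le e^{Mx}$ together with $M\Delta\tau = T$.

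First I would apply the triangle inequality to (\ref{eq_inf_1}) to get
\begin{equation*}
  |(v_k^n)^-| \;\le\; \|(v^{n-1})^+\|_\infty \sum_{j=-N/2}^{N/2-1} \Delta x\,|\widetilde{g}_{k-j}(\Delta\tau,\alpha)|.
\end{equation*}
The crux is to control the sum of absolute values. Write $|y| = y - 2\min(y,0)$, so
\begin{equation*}
  \sum_{j} \Delta x\,|\widetilde{g}_{k-j}| \;=\; \sum_{j}\Delta x\,\widetilde{g}_{k-j} \;-\; 2\sum_{j}\Delta x\,\min(\widetilde{g}_{k-j},0).
\end{equation*}
The first term equals $C_1 \le 1$ by Lemma~\ref{sum_g} (the sum is independent of $k$ by the periodic structure of $\widetilde{g}_{k-j}$). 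The second term is bounded in absolute value by $\epsilon_1 \Delta\tau/T$, because Algorithm~\ref{alg_start} only exits when the monotonicity test on line~\ref{test_1} passes, i.e.\ $|\sum_j \Delta x \min(\widetilde{g}(y_j,\Delta\tau,\alpha),0)| < \epsilon_1\Delta\tau/T$, and the sum is invariant under the shift $j \mapsto k-j$. Combining these gives $\sum_j \Delta x\,|\widetilde{g}_{k-j}| \le 1 + 2\epsilon_1\Delta\tau/T$, hence $\|(v^n)^-\|_\infty \le (1 + 2\epsilon_1\Delta\tau/T)\|(v^{n-1})^+\|_\infty$.

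Next I would chain this with the hypothesis (\ref{stab_1}), yielding
\begin{equation*}
  \|(v^n)^+\|_\infty \;\le\; \|(v^n)^-\|_\infty \;\le\; \bigl(1 + 2\epsilon_1\Delta\tau/T\bigr)\,\|(v^{n-1})^+\|_\infty,
\end{equation*}
and iterate from $n$ back to $0$. Applying (\ref{stab_1}) once more at $n=0$ (so that $\|(v^0)^+\|_\infty \le \|(v^0)^-\|_\infty$) gives
\begin{equation*}
  \|(v^n)^+\|_\infty \;\le\; \bigl(1 + 2\epsilon_1\Delta\tau/T\bigr)^{n}\,\|(v^0)^-\|_\infty.
\end{equation*}
Using $(1+x)^n \le e^{nx}$ with $n \le M$ and $M\Delta\tau = T$, the exponent is at most $2\epsilon_1$, which yields the stated bound $C_2 = e^{2\epsilon_1}\|(v^0)^-\|_\infty$.

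The main (small) obstacle is making sure the $k$-independence of both sums is justified: $\widetilde{g}_{k-j}(\Delta\tau,\alpha)$ is periodic in its index by construction (discussed just after equation (\ref{g_truncated})), so summing over $j \in \{-N/2,\dots,N/2-1\}$ picks up exactly one period regardless of $k$, and the bounds from Lemma~\ref{sum_g} and Algorithm~\ref{alg_start} then apply uniformly. Everything else is bookkeeping.
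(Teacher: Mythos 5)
Your proposal is correct and follows essentially the same route as the paper: the identity $|y| = y - 2\min(y,0)$ is just a repackaging of the paper's split of $\widetilde{g}_{k-j}$ into its positive and negative parts combined with Lemma \ref{sum_g}, and the one-step bound $\|(v^n)^-\|_\infty \leq (C_1 + 2\epsilon_1\Delta\tau/T)\|(v^{n-1})^+\|_\infty$, the use of the monotonicity test on line \ref{test_1}, and the final iteration via $(1+x)^n \leq e^{nx}$ all match the paper's argument.
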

\begin{proof}
From equation (\ref{eq_inf_1}) 
\begin{eqnarray}
  (v_k^{n})^- & = &  \sum_{j=-N/2}^{N/2-1}  \Delta x ~ \widetilde{g}_{k-j} (v_j^{n-1})^+ \nonumber \\
                & = &  \sum_{j=-N/2}^{N/2-1}  \Delta x ~\max( \widetilde{g}_{k-j} ,0) (v_j^{n-1})^+ 
                     +  \sum_{j=-N/2}^{N/2-1}  \Delta x~ \min( \widetilde{g}_{k-j} ,0) (v_j^{n-1})^+
              ~.  \label{stab_2}
\end{eqnarray}
which then implies
\begin{eqnarray}
         | (v_k^{n})^- |& \leq & \| (v^{n-1})^+ \|_{\infty} 
                                \sum_{j=-N/2}^{N/2-1}  \Delta x ~ \max( \widetilde{g}_{k-j} , 0) 
       +   \| (v^{n-1})^+ \|_{\infty}  \sum_{j=-N/2}^{N/2-1}   \Delta x ~  | \min( \widetilde{g}_{k-j} ,0) |  
                ~. \nonumber 
\end{eqnarray}
From Lemma \ref{sum_g}
\begin{eqnarray}
    \sum_{j=-N/2}^{N/2-1}  \Delta x ~ \max( \widetilde{g}_{k-j} , 0) 
              & = & C_1 +  \sum_{j=-N/2}^{N/2-1}  \Delta x ~ | \min( \widetilde{g}_{k-j} , 0) |
         \label{stab_4} ~
\end{eqnarray}
and so
\begin{eqnarray}
    | (v_k^{n})^- |& \leq & \| (v^{n-1})^+ \|_{\infty} \biggl( 
                                   C_1 +  2 \sum_{j=-N/2}^{N/2-1} \Delta x ~  | \min( \widetilde{g}_{k-j} , 0) |
                                              \biggr) \nonumber \\
                                              & \leq & \| (v^{n-1})^+ \|_{\infty} ( C_1 + 2  \epsilon_1 \frac{\Delta \tau}{T} )
                                              \label{stab_6}~ .
\end{eqnarray}
Using lines \ref{test_1} and \ref{if_algo1} in Algorithm \ref{alg_start}. 
Since equation (\ref{stab_6}) is true for any $k$ we have that
\begin{eqnarray}
   \| (v^{n})^- \|_{\infty} & \leq & \| (v^{n-1})^+ \|_{\infty} ( C_1 + 2 \epsilon_1 \frac{\Delta \tau}{T}) ~, \nonumber
\end{eqnarray}
which combined with equation (\ref{stab_1})  and using $C_1 \leq 1$ gives
\begin{eqnarray}
 \| (v^{n})^+ \|_{\infty} &\leq & 
\| (v^{n-1})^+ \|_{\infty} ( 1 + 2 \epsilon_1 \frac{\Delta \tau}{T}) ~.
         \nonumber 
\end{eqnarray}
Iterating the above bound and using equation (\ref{stab_1}) at $n=0$ gives
\begin{eqnarray}
 \| (v^{n})^+ \|_{\infty} & \leq & \| (v^0)^- \|_{\infty} (1 + 2  \epsilon_1 \frac{\Delta \tau}{T} )^{n} 
                                \nonumber \\
                   &  \leq &  \| (v^0)^- \|_{\infty} ~e^{ 2 \epsilon_1  n \frac{\Delta \tau}{T} }
                   ~~ \leq ~~  \| (v^0)^- \|_{\infty} ~e^{ 2  \epsilon_1 } ~=~C_2~.
                            \nonumber 
\end{eqnarray}

\end{proof}

\begin{remark}[Jump condition]\label{jump_remark}
We remark that (\ref{stab_1}), the jump condition $\| (v^{n})^+ \|_{\infty}  \leq  \| (v^{n})^- \|_{\infty} $, is trivially satisfied if 
$\mathbb{I}_{\Delta x} (x)$  in line \ref{optimal_control_algo} in Algorithm \ref{alg_monotone}
is a linear interpolant.
\end{remark}
\noindent
From Theorem \ref{stab_lemma} and Remark \ref{jump_remark}
we have the immediate result,
\begin{corollary}[Stability of Algorithm \ref{alg_monotone}]
Algorithm \ref{alg_monotone} is $\ell_{\infty}$ stable.
\end{corollary}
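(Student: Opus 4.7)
The plan is to show that this corollary follows by directly verifying the three hypotheses of Theorem \ref{stab_lemma} for Algorithm \ref{alg_monotone}. The argument is largely bookkeeping, since the heavy lifting has been done in the theorem itself.

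First I would observe that the REQUIRE clause of Algorithm \ref{alg_monotone} specifies that $\widetilde{G}$ is produced by Algorithm \ref{alg_start}, so the first hypothesis of Theorem \ref{stab_lemma} is satisfied by construction. Second, I would identify lines \ref{DFT_v}--\ref{IDFT_v} of Algorithm \ref{alg_monotone}, which apply $\widetilde{G}$ in the frequency domain via FFT/iFFT, with the discrete convolution recurrence (\ref{eq_inf_1}). This identification is exactly the content of equation (\ref{final_mono_conv}): the FFT/iFFT implementation is an algebraic identity with the physical-space convolution (\ref{mono_conv}) under periodic extension, so the recurrence (\ref{eq_inf_1}) holds with $\widetilde{g}_{k-j} = \widetilde{g}(x_k - x_j, \Delta \tau, \alpha)$.

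The remaining hypothesis is the jump condition (\ref{stab_1}). I would appeal to Remark \ref{jump_remark}, which notes that this condition is trivially satisfied whenever the interpolation operator on line \ref{optimal_control_algo} of Algorithm \ref{alg_monotone} is linear, as is the case here by the definition (\ref{interp_def}). The short reason is that $\mathbb{I}_{\Delta x}(x)(v^{m})^-$ returns a convex combination of two nodal values and so is pointwise bounded in modulus by $\|(v^{m})^-\|_{\infty}$; the intervention operator (\ref{intervention_def}) only shifts the spatial argument, and taking the infimum over admissible controls $c \in \mathbb{Z}$ yields a value that still lies in $[-\|(v^{m})^-\|_{\infty},\|(v^{m})^-\|_{\infty}]$.

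With all three hypotheses of Theorem \ref{stab_lemma} in place, the conclusion $\|(v^{n})^+\|_{\infty} \leq e^{2\epsilon_1}\|(v^0)^-\|_{\infty}$ follows by direct application, giving an $\ell_{\infty}$ bound that is independent of $N$, $M$, and $\Delta x$. There is no genuine obstacle here; the only mildly subtle step, already handled by Remark \ref{jump_remark}, is confirming that the numerical optimization in the intervention step does not inflate the $\ell_{\infty}$ norm, and this is immediate for linear interpolation composed with a pointwise shift followed by an infimum.
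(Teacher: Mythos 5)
Your proposal is correct and follows the same route as the paper: the corollary is stated there as an immediate consequence of Theorem \ref{stab_lemma} together with Remark \ref{jump_remark}, which is precisely the hypothesis-checking you carry out. Your additional justification of the jump condition (\ref{stab_1}) via the convex-combination property of linear interpolation, the shift form of the intervention operator (\ref{intervention_def}), and the infimum preserving the bound simply makes explicit what the paper calls ``trivially satisfied.''
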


\begin{lemma}[Minimum value of solution.]\label{min_lemma}
Let $ (v^n)^+$ be generated using equation  \ref{eq_inf_1}, and set
\begin{eqnarray}
   (v^n)^+_{\min} = \min_k ~(v_k^n)^+ ~. \nonumber 
\end{eqnarray}
If the conditions for Lemma \ref{stab_lemma} are satisfied and
\begin{eqnarray}
   (v^n)^+_{\min} & \geq & (v^n)^-_{\min} ~,\label{min_eq_1}
\end{eqnarray}
then
\begin{eqnarray}
  (v^n)^+_{\min} & \geq & (v^0)^-_{\min} ~ (C_3)^n - C_2 (e^{\epsilon_1 } -1) 
                     \nonumber 
\end{eqnarray}
where $C_2 ~=~ \| (v^0)^- \|_{\infty} e^{ 2  \epsilon_1 }$ is given in  Lemma \ref{stab_lemma} and $C_3 =   \sum_{j=-N/2}^{N/2-1}  \Delta x ~\max( \widetilde{g}_{k-j} ,0)$.
\end{lemma}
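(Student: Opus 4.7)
The plan is to mirror the proof of Theorem \ref{stab_lemma}, but now tracking lower bounds instead of absolute values. The key identity to exploit is the splitting
\begin{eqnarray*}
  (v_k^n)^- & = & \sum_{j=-N/2}^{N/2-1} \Delta x \, \max(\widetilde{g}_{k-j},0) (v_j^{n-1})^+
                + \sum_{j=-N/2}^{N/2-1} \Delta x \, \min(\widetilde{g}_{k-j},0) (v_j^{n-1})^+ ,
\end{eqnarray*}
since the two sums have definite signs of weights and can therefore be bounded below separately. For the first sum, the weights are nonnegative, so replacing each $(v_j^{n-1})^+$ by the minimum gives a lower bound of $C_3 (v^{n-1})^+_{\min}$ (using Lemma \ref{sum_g} and the fact that $C_3$ is independent of $k$ by periodicity). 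For the second sum, the weights are nonpositive, so replacing each $(v_j^{n-1})^+$ by the $\ell_\infty$ bound $C_2$ (from Theorem \ref{stab_lemma}) gives a lower bound of $-C_2 \sum_j \Delta x \,|\min(\widetilde{g}_{k-j},0)| \geq -C_2 \epsilon_1 \Delta \tau / T$, invoking the monotonicity test on line \ref{test_1} of Algorithm \ref{alg_start}.

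Combining and taking the minimum over $k$, then applying hypothesis (\ref{min_eq_1}) to pass from $(v^n)^-_{\min}$ to $(v^n)^+_{\min}$, yields the one-step recursion
\begin{eqnarray*}
  (v^n)^+_{\min} & \geq & C_3 \, (v^{n-1})^+_{\min} - C_2 \epsilon_1 \frac{\Delta \tau}{T} .
\end{eqnarray*}
Iterating from $n=0$ and using $(v^0)^+_{\min} \geq (v^0)^-_{\min}$ together with $C_3 \geq 0$ gives
\begin{eqnarray*}
  (v^n)^+_{\min} & \geq & C_3^n \, (v^0)^-_{\min} - C_2 \epsilon_1 \frac{\Delta \tau}{T} \sum_{k=0}^{n-1} C_3^k .
\end{eqnarray*}

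The main obstacle, and really the only nontrivial step, is bounding the geometric sum so that exactly the factor $e^{\epsilon_1} - 1$ drops out independently of $n$ and $\Delta \tau$. From equation (\ref{stab_4}) and the monotonicity test, $C_3 \leq 1 + \epsilon_1 \Delta \tau / T$. Writing $\beta = \epsilon_1 \Delta \tau / T$ and treating the cases $C_3 \leq 1$ and $C_3 > 1$: in the former, $\sum_{k=0}^{n-1} C_3^k \leq n$ and $\beta n \leq \epsilon_1 \leq e^{\epsilon_1} - 1$. In the latter, the geometric sum equals $(C_3^n - 1)/(C_3 - 1)$, and using $(1+\delta)^n \leq e^{n\delta}$ with the monotonicity of $x \mapsto (e^x-1)/x$ on $x > 0$ one obtains
\begin{eqnarray*}
  \beta \cdot \frac{C_3^n - 1}{C_3 - 1} & \leq & \beta \cdot \frac{e^{n(C_3-1)} - 1}{C_3-1} \;\leq\; \beta n \cdot \frac{e^{\epsilon_1} - 1}{\epsilon_1} \;\leq\; e^{\epsilon_1} - 1 ,
\end{eqnarray*}
since $n \Delta \tau / T \leq 1$. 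In both cases the correction term is bounded by $C_2(e^{\epsilon_1} - 1)$, which completes the proof.
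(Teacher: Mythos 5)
Your proof is correct and follows essentially the same route as the paper: the same sign-based splitting of the convolution, the same one-step recursion $(v^n)^+_{\min} \geq C_3 (v^{n-1})^+_{\min} - C_2 \epsilon_1 \Delta\tau/T$, and the same iteration combined with $C_3 \leq 1 + \epsilon_1 \Delta\tau/T$ and $n\Delta\tau \leq T$. The only difference is that you bound the geometric sum by $e^{\epsilon_1}-1$ more explicitly than the paper does (treating $C_3 \leq 1$ and $C_3 > 1$ separately, which also avoids the degenerate form $\frac{1-C_3^n}{1-C_3}$ at $C_3 = 1$); this is a welcome clarification rather than a departure.
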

\begin{proof}
From equation (\ref{eq_inf_1}) and using equation (\ref{stab_2}) along with the definition of $C_3$
we obtain
\begin{eqnarray}
   (v_k^n)^- & \geq & (v^{n-1})^+_{\min}  \sum_{j=-N/2}^{N/2-1}  \Delta x ~\max( \widetilde{g}_{k-j} ,0)
             + \sum_{j=-N/2}^{N/2-1}  \Delta x ~\min( \widetilde{g}_{k-j} ,0) (v_j^n)^+ 
                          \nonumber \\
              & \geq & (v^{n-1})^+_{\min}  \sum_{j=-N/2}^{N/2-1}  \Delta x ~\max( \widetilde{g}_{k-j} ,0)
                    - \| (v^{n-1})^+ \|_{\infty}  \sum_{j=-N/2}^{N/2-1}  \Delta x ~ | \min( \widetilde{g}_{k-j} ,0) |~
                     \nonumber \\
                    & =  &  (v^{n-1})^+_{\min} C_3
                    - \| (v^{n-1})^+ \|_{\infty} \biggl(
                                       \sum_{j=-N/2}^{N/2-1}  \Delta x ~| \min( \widetilde{g}_{k-j} , 0) |
                                                 \biggr) ~.
                    \nonumber 
\end{eqnarray}
Using Lemma \ref{stab_lemma} and lines  \ref{test_1} and \ref{if_algo1} in Algorithm \ref{alg_start} then gives
\begin{eqnarray}
    (v_k^n)^- & \geq &  (v^{n-1})^+_{\min} C_3
                            - C_2 \epsilon_1  \frac{\Delta \tau}{T}  ~
                      \nonumber 
\end{eqnarray}
and, since this is valid for any $k$,  using (\ref{min_eq_1}) we obtain
 \begin{eqnarray}
    (v^n)_{\min}^+ & \geq & (v^{n-1})^+_{\min} C_3
                            - C_2 \epsilon_1 \frac{\Delta \tau}{T} . \nonumber
                            \end{eqnarray}
Iterating implies
\begin{eqnarray}
     (v^n)_{\min}^+ & \geq & (v^{0})^+_{\min} {C_3}^n - C_2 \epsilon_1  \frac{\Delta \tau }{T} 
                    \biggl( \frac{ 1 - {C_3}^n}{ 1 - C_3} \biggr) \nonumber \\
                  & \geq & (v^{0})^-_{\min}  {C_3}^n - C_2 \epsilon_1  \frac{\Delta \tau}{T} 
                    \biggl( \frac{ 1 - {C_3}^n}{ 1 - C_3} \biggr)   
            ~, \label{min_eq_6}
\end{eqnarray}
where we again use equation (\ref{min_eq_1}) in the last line.
From equation (\ref{stab_4}) and the definition of $C_3$ we have 
\begin{eqnarray}
     C_3 &=& C_1 +  \sum_{j=-N/2}^{N/2-1}  \Delta x ~| \min( \widetilde{g}_{k-j} , 0) |
                 \leq  1 + \epsilon_1  \frac{\Delta \tau}{T} ~, 
              \label{min_eq_7}
\end{eqnarray}
where the last inequality  follows  
lines \ref{test_1} and \ref{if_algo1} in Algorithm \ref{alg_start} (and recalling that $C_1 \leq 1$).
Combining equations (\ref{min_eq_1}), (\ref{min_eq_6}) and (\ref{min_eq_7} ) and noting that $n \Delta \tau \leq T$ gives
\begin{eqnarray}
    (v^n)_{\min}^+ & \geq & (v^{0})^+_{\min} {C_3}^n 
                          - C_2 ( e^{\epsilon_1 } - 1)~ 
                           ~~\geq ~~ (v^{0})^-_{\min} {C_3}^n 
                          - C_2 ( e^{\epsilon_1 } - 1)~.\nonumber
\end{eqnarray}
\end{proof}

\begin{remark}\label{interp_remark}
We note that condition \ref{min_eq_1}, that is, $(v^n)^+_{\min}  \geq  (v^n)^-_{\min}$ is satisfied if
$\mathbb{I}_{\Delta x} (x)$  in line \ref{optimal_control_algo} in Algorithm \ref{alg_monotone}
is a linear interpolant.
\end{remark}
\bigskip

\begin{theorem}[$\epsilon$-Discrete Comparison Principle] \label{main_result}
Suppose we have two independent discrete solutions 
\begin{eqnarray}
    (u^{n})^+ &= &[ u(x_{-N/2}, \tau_n^+), \ldots, u(x_{+N/2-1}, \tau_n^+) ]
                   \nonumber \\
    (w^{n})^+ &= &[  w(x_{-N/2}, \tau_n^+), \ldots, w(x_{+N/2-1}, \tau_n^+) ]
\end{eqnarray}
with
\begin{eqnarray}
     (u^{0})^- & \geq & (w^{0})^-
\end{eqnarray}
where the inequality is understood in the component-wise sense,
and  $(u^{n})^+, (w^{n})^+$ are computed using Algorithm \ref{alg_monotone}.
If $\widetilde{G}$ is
computed using Algorithm \ref{alg_start} and 
$\mathbb{I}_{\Delta x} (x)$ is a linear interpolant 
then
\begin{eqnarray}
(u^{n})^+ - (w^{n})^+ & \geq 
                        -   \epsilon_1 \| (u^0 - w^0)^- \|_{\infty}  + O( \epsilon_1^2) ~;~ \epsilon_1 \rightarrow 0~.
\end{eqnarray}
\end{theorem}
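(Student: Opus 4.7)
The plan is to reduce the claim to an application of Theorem \ref{stab_lemma} and Lemma \ref{min_lemma} applied to the difference $d^n := u^n - w^n$. The key structural observation is that the Fourier convolution step is linear in the discrete solution, so $d^n$ satisfies the same recursion (\ref{eq_inf_1}) as $u^n$ and $w^n$ individually, and the intervention step, although nonlinear through the $\inf_c$, can still be shown to preserve component-wise lower bounds provided the interpolator is linear.

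First I would verify that between intervention times,
\begin{equation*}
(d_k^n)^- \;=\; \sum_{j=-N/2}^{N/2-1} \Delta x\,\widetilde{g}_{k-j}\,(d_j^{n-1})^+,
\end{equation*}
which is immediate from subtracting the two copies of (\ref{eq_inf_1}). Next I would handle the intervention step. Let $c_u^\star(j)$ and $c_w^\star(j)$ denote the optimal controls for $u$ and $w$ at node $x_j$. Using the optimality of $c_u^\star(j)$ for $u$, the linearity of $\mathcal{M}(c)$ in the underlying function, and the fact that $\mathbb{I}_{\Delta x}$ is a linear interpolant and therefore a convex combination of neighbouring nodal values, I get
\begin{equation*}
(u_j^m)^+ - (w_j^m)^+ \;\geq\; \mathcal{M}(c_u^\star(j))\!\left(\mathbb{I}_{\Delta x}(x_j)(u-w)^-\right) \;\geq\; \min_k (d_k^m)^-,
\end{equation*}
and symmetrically $(u_j^m)^+ - (w_j^m)^+ \leq \max_k (d_k^m)^-$. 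Hence the jump conditions
\begin{equation*}
\|(d^m)^+\|_\infty \leq \|(d^m)^-\|_\infty \quad\text{and}\quad (d^m)^+_{\min} \geq (d^m)^-_{\min}
\end{equation*}
both hold for $d$, i.e.\ $d$ verifies the hypotheses of both Theorem \ref{stab_lemma} (with $(v^n)^\pm$ replaced by $(d^n)^\pm$) and Lemma \ref{min_lemma}.

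Applying Theorem \ref{stab_lemma} to $d$ yields $\|(d^n)^+\|_\infty \leq e^{2\epsilon_1}\|(d^0)^-\|_\infty =: C_2$. Since the hypothesis gives $(d^0)^-_{\min} \geq 0$ and $C_3 \geq C_1 \geq 0$, Lemma \ref{min_lemma} applied to $d$ gives
\begin{equation*}
(d^n)^+_{\min} \;\geq\; (d^0)^-_{\min}\, C_3^{\,n} \;-\; C_2\bigl(e^{\epsilon_1} - 1\bigr) \;\geq\; -\,C_2\bigl(e^{\epsilon_1} - 1\bigr).
\end{equation*}
Substituting $C_2 = e^{2\epsilon_1}\|(u^0-w^0)^-\|_\infty$ and expanding $e^{2\epsilon_1}(e^{\epsilon_1}-1) = \epsilon_1 + O(\epsilon_1^2)$ produces the desired bound
\begin{equation*}
(u^n)^+ - (w^n)^+ \;\geq\; -\epsilon_1 \|(u^0-w^0)^-\|_\infty + O(\epsilon_1^2)
\end{equation*}
in the component-wise sense.

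The main obstacle I expect is the intervention step: the $\inf_c$ is nonlinear and does not commute with subtraction, so the standard linear monotonicity argument does not apply directly. The resolution is the envelope-style trick of bounding $(u^m)^+ - (w^m)^+$ from below by evaluating $\mathcal{M}(c_u^\star(j))$ on both $u$ and $w$ (and symmetrically from above with $c_w^\star(j)$). Everything else then follows mechanically from the two earlier results, whose hypotheses are satisfied once the linear-interpolation property from Remarks \ref{jump_remark} and \ref{interp_remark} is in hand.
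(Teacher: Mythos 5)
Your proposal is correct and follows essentially the same route as the paper: reduce to the difference $z^n = u^n - w^n$, use linearity of the convolution step, verify the jump conditions (\ref{stab_1}) and (\ref{min_eq_1}) for $z$ at the intervention times, and then invoke Lemma \ref{min_lemma} together with the expansion $e^{2\epsilon_1}(e^{\epsilon_1}-1) = \epsilon_1 + O(\epsilon_1^2)$. The only cosmetic difference is that you make the envelope argument explicit by evaluating $\mathcal{M}$ at the optimal control of $u$, whereas the paper states the equivalent bound $|\inf_c A - \inf_c B| \leq \sup_c |A - B|$ directly.
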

\begin{proof}
Let $ (z^n)^+ = (u^{n})^+ - (w^{n})^+$, $ (z^n)^- = (u^{n})^- - (w^{n})^- $,
then  
\begin{eqnarray*}
  (z_k^{n})^- & = &  \sum_{j=-N/2}^{N/2-1}  \Delta x ~ \widetilde{g}_{k-j} (z_j^{n-1})^+
   ~. 
\end{eqnarray*}
Noting that
\begin{eqnarray}
   z_j(\tau_n^+) = \displaystyle \inf_c \mathcal{M} (c) \bigl( ~\mathbb{I}_{\Delta x} (x_j) (u^{m})^- ~\bigr)
                  - \displaystyle \inf_c \mathcal{M} (c) \bigl( ~\mathbb{I}_{\Delta x} (x_j) (w^{m})^- ~\bigr)
\end{eqnarray}
then
\begin{eqnarray}
   | z_j(\tau_n^+) | &\leq & \displaystyle \sup_c  \mathcal{M}(c)  \big|  ~\mathbb{I}_{\Delta x}~ (x_j) \bigl( (u^{m})^- 
                                                   - (w^{m})^- ~\bigr) ~\big|
\end{eqnarray}
hence, using the definition of the intervention operator (\ref{intervention_def}), 
we obtain
\begin{eqnarray}
   \| (z^n)^+ \|_{\infty} \leq \| (z^n)^- \|_{\infty} ~. 
\end{eqnarray}
Similarly
\begin{eqnarray}
 (z^n)^+_{\min} & = &   \displaystyle \min_j z_j(\tau_n^+) \nonumber \\
               & \geq &  \displaystyle \min_j \displaystyle \inf_c \mathcal{M} (c)  
                                ~\mathbb{I}_{\Delta x} (x_j)  \bigl((u^{m})^-
                                                         - (w^{m})^- ~\bigr) \nonumber \\
               & \geq & (z^n)^-_{\min} ~.
\end{eqnarray}
Hence condition (\ref{stab_1}) of Lemma \ref{stab_lemma} and condition (\ref{min_eq_1}) of Lemma \ref{min_lemma}
are satisfied.  Applying Lemma \ref{min_lemma} to $(z^n)^+, (z^n)^-$ we get
\begin{eqnarray}
    (z^n)^+_{\min} & \geq & (z^0)^-_{\min} (C_3)^n -  e^{2 \epsilon_1} \| (u^0 - w^0)^- \|_{\infty} ( e^{\epsilon_1} -1 )
\end{eqnarray}
where $C_3 = \sum_{j=-N/2}^{N/2-1}  \Delta x ~\max( \widetilde{g}_{k-j} ,0)$.
Since $(z^0)^-_{\min} \geq 0$ and  $0 \leq C_3^n \leq  e^{\epsilon_1}$, the result follows.
\end{proof}

\begin{remark}
If Algorithm \ref{alg_start} is used to construct $\widetilde{G}$ for use
in Algorithm \ref{alg_monotone}, then the $\epsilon$-discrete comparison
property 
is satisfied for any $N, \Delta \tau, M$ up to order $\epsilon_1$.   
Since  typically $\widetilde{g}( y_j, \Delta \tau, \alpha) \rightarrow \widetilde{g}( y_j, \Delta \tau, \infty) \geq 0$
exponentially in $\alpha$, in practice it is very inexpensive to make $\epsilon_1$ as
small as desired.
\end{remark}


\begin{remark}[Continuously observed impulse control problems]
By determining the optimal control at each timestep, we can apply our monotone Fourier
method to the continuously observed impulse control problem
\begin{eqnarray}
  \max \biggl[ v_\tau - \mathcal{L} v, v - \inf_c \mathcal{M}(c) v \biggr] = 0 ~.
\end{eqnarray}
This is effectively a method whereby the optimal control is applied explicitly, as in \citep{chen-forsyth:2008}.  
Using the methods developed in this paper combined with those from  \citep{chen-forsyth:2008}, it is 
straightforward to show that the $\epsilon$-monotone Fourier technique is 
$\ell_{\infty}$ stable and consistent in the viscosity sense as 
$\Delta \tau, \Delta x \rightarrow 0$.  The $\epsilon$-monotone Fourier method is also monotone to $O(h)$ where  $h = O( \Delta x) = O(\Delta \tau)$ is the discretization parameter.  Thus it is possible to show convergence to the
viscosity solution using the results in \citet{barles-souganidis:1991} extended as in \citet{Azimzadeh},
using the $\epsilon$ monotonicity property as in \citet{Bokanowski}.
\end{remark}

\section{Minimization of wrap-around error}\label{wrap_section}


The use of the convolution form for our solution (\ref{final_mono_conv}) is rigorously correct for
a periodic extension of the solution and the Green's function.
In normal option pricing applications, the {\em wrap-around} error
due to periodic extension causes little error. However, in
control applications, the values used in the optimization
step (\ref{intervention_1}) may be near the ends of the grid and
hence large errors may result \citep{Lippa2013,Ruijter2013,Song2016}.
Hence we need to consider methods to reduce errors associated with wrap-around.

In order to minimize the effect of wrap-around we proceed in the following manner.
Given the localized problem on $[x_{\min}, x_{\max}]$ with $N$
nodes, we construct an auxiliary grid with  $N^a = 2N$ nodes,
on the domain $[x_{\min}^a, x_{\max}^a]$ where
\begin{eqnarray}
    x_{\min}^a &= &x_{\min} - \frac{ (x_{\max} - x_{\min}) }{ 2} ~~\mbox{ and } ~~
    x_{\max}^a  =  x_{\max} + \frac{ (x_{\max} - x_{\min} )} { 2}
\end{eqnarray}
with $( x_{\max}^a - x_{\min}^a ) = 2 ( x_{\max} - x_{\min})$. 
We construct and store the DFT of the projection of the Green's function 
$\widetilde{G}( \omega_p, \Delta \tau, \alpha), p = -N^a/2, \ldots, p = N^a/2 -1$
on this auxiliary grid.  We then replace line \ref{DFT_v} in Algorithm \ref{alg_monotone}
by applying the DFT to the solution $v$ on the auxiliary grid
\begin{eqnarray}
     v( x_k, \tau_n^+)^a & = &  v( x_k, \tau_n^+) ~;~ ~~~~~k=-N/2, \ldots, k=N/2-1 \nonumber \\
                         & = & v( x_{-N/2}, \tau_n^+) ~;~~ k=-N^a/2,\ldots, -N/2-1 
                                        \label{aux_small}
                                            \\
                         & = & A( x_{k}, \tau_n^+) ~;~~~~~~ k = N/2, \ldots, N^a/2 -1
                                     \label{aux_large}
  ~,
\end{eqnarray}
where $A(x, \tau)$ is an asymptotic form of the solution, which we assume to be available
from financial reasoning.  On the auxiliary grid near $x \rightarrow -\infty$ we simply 
extend the solution by the constant value at $x = x_{\min}$, which is expected to generate 
a small error, since the grid spacing (in terms of $S=e^x$) is very small.  
We then carry out lines \ref{DFT_v} - \ref{IDFT_v} of Algorithm \ref{alg_monotone}
on the auxiliary grid and generate $ (v^n)^-$ by discarding all the values on the auxiliary 
grid which are not on the  original grid (as these are contaminated by wrap-around errors).
The errors incurred by using extensions (\ref{aux_small}) and (\ref{aux_large}) can be
made small by choosing $ |x_{\min}|$ and $ x_{\max}$ sufficiently large.
\begin{remark}[Use of asymptotic form to reduce wrap-around error]
Use of the above technique necessitates some changes to the proof
of Theorem \ref{main_result}.  However, the main result is the same,
with adjustments to some of the constants in the bounds. This
is a tedious algebraic exercise which we omit.
\end{remark}
{\color{black}
\begin{remark}[Additional complexity to reduce wrap-around]
For a one dimensional problem, the complexity for one timestep is $O(N^a \log N^a) = O( 2N \log( 2N)$,
where $N$ is the number of nodes in the original grid.  In the case of the
path dependent problem in Section \ref{sec8}, if there are $N_x$ nodes in the $\log S$ direction
and $N_b$ nodes in the bond direction, then the complexity for one timestep
is $O( 2 N_b N_x \log( 2N_x) )$.
\end{remark}}

\section{Numerical examples}\label{sec7}

\subsection{European option}

Consider a European option written on an underlying stock whose price $S$ follows a jump diffusion process.  
Denote by $\xi$ the random number representing the jump multiplier so that when a jump occurs, 
we have $S_t =   \xi S_{t^-}$.
The risk neutral process followed by $S_t$ is
\begin{eqnarray}
   \frac{dS_t}{S_{t^-}} &= (r -\lambda \kappa) dt + \sigma  dZ + d
     \biggl( \displaystyle \sum_{i=1}^{\pi_t} (\xi_i-1) \biggr) ~~\mbox{ with } ~~\kappa = E[\xi] - 1
    \label{jump_process}
\end{eqnarray}
where $E[\cdot]$ denotes the expectation operator.  Here, $dZ$ is the increment of a Wiener process,
$r$ is the risk free rate, $\sigma$ is the volatility,
$\pi_t$ is a Poisson process with positive intensity parameter
$\lambda$, and $\xi_i$ are i.i.d.
positive random variables.  The density function $f(y)$,  $y = \log( \xi)$ is
assumed double exponential \citep{Kou2004}
\begin{equation}
f(y) = p_{u} ~\eta_1 ~e^{-\eta_1 y} {\bf{1}}_{y \geq 0} ~+~
  (1-p_{u}) ~\eta_2 ~ e^{\eta_2 y} {\bf{1}}_{y < 0} ~
\label{kou_jumps}
\end{equation}
with the expectation
\begin{eqnarray}
   E[\xi] = \frac{p_{u} ~ \eta_1}{\eta_1 - 1} + \frac{(1-p_{u} ) ~ \eta_2}{\eta_2 + 1} ~.
\end{eqnarray}
Given that a jump occurs, $p_{u}$ is the probability of an upward jump
and $(1-p_{u})$ is the probability of a downward jump.

The price of a European call option $v(x , \tau)$ with $x  = \log S$ is then given as the solution to
\begin{eqnarray}
  v_{\tau} & = & \frac{\sigma^2}{2} v_{xx} + (r - \frac{\sigma^2}{2} - \lambda \kappa) v_x
                  - (r+\lambda) v + \lambda \int_{-\infty}^{+\infty} v( x + y) f(y) ~dy 
                      \nonumber \\
          && ~~\mbox{ with } ~~ v(x, 0)  =   \max(e^x - K, 0) ~. \label{pde_1}
\end{eqnarray}
The Green's function for this problem is given in Appendix A.

  
The particular parameters for this test are given in Table \ref{simple_euro} with the results appearing in Table \ref{simple_euro_results_1}.
All methods obtain smooth second order convergence, with the
exception of the FST/CONV  Simpson rule method which gives fourth
order convergence, due to the higher order quadrature method.
This is to be expected in this case since there is a node
at the strike.  Increasing $x_{\max},  |x_{\min}|$ altered results
in the last 2 digits in the table.  
{\color{black}
This is due to the effect of localizing the problem to $[x_{\min}, x_{\max}]$,
and the effects of FFT wrap-around.}

\begin{table}[tb]
  \begin{center}
    \begin{tabular}{ll} \toprule
      Expiry time & .25 years \\
      Strike K  & 100\\
      Payoff    & call \\
      Initial asset price $S_0$  & 100\\
      Risk-free rate $r$ & .05  \\
      Volatility $\sigma$  & .15 \\
       $\lambda$  & .1 \\
      $\eta_1$ &  3.0465 \\
      $\eta_2$ &  3.0775\\
      $p_{u}$ &  0.3445\\
      $x_{\max}$ & $ \log(S_0) + 10$\\
      $x_{\min}$ & $ \log(S_0) - 10$\\
      $\epsilon_1, \epsilon_2$ & $10^{-6}$ \\
      Asymptotic form $x \rightarrow \infty$   & $A(x) = e^x$ \\
           \bottomrule
    \end{tabular}
    \caption{European call option test.
     \label{simple_euro}}
  \end{center}
\end{table}

\begin{table}[tb]
  \begin{center}
    \begin{tabular}{|l|l|l|l|l|l|l|l|l|} \hline
                & \multicolumn{4}{c|}{Monotone Methods} & \multicolumn{4}{c|}{FST/CONV}\\ \hline
                & \multicolumn{2}{c|}{Piecewise linear} &  \multicolumn{2}{c|}{Piecewise constant}
                & \multicolumn{2}{c|}{Trapezoidal} & \multicolumn{2}{c|}{Simpson} \\
                         \hline
    $N$ & Value & Ratio & Value & Ratio & Value & Ratio & Value & Ratio\\
                          \hline
    $2^9 $      & 3.9808516210  &       & 3.9443958729   &      & 3.9075619850 &       & 3.9784907318   &  \\
    $2^{10}$      & 3.9753205007  &     & 3.9662547470  &       & 3.9571661688 &       & 3.9737010716   &   \\
    $2^{11}$      & 3.9739391670 &  4.0 & 3.9716756819  & 4.0   & 3.9694107823 & 4.1   & 3.9734923202   & 23  \\
    $2^{12}$      & 3.9735939225 &  4.0 & 3.9730282349  &  4.0  & 3.9724624589 & 4.0   & 3.9734796846   & 17  \\
    $2^{13}$      & 3.9735076171 &  4.0 & 3.9733662066  &  4.0  & 3.9732247908 & 4.0   & 3.9734789013   & 16     \\
    $2^{14}$      & 3.9734860412 &  4.0 & 3.9734506895  &  4.0  & 3.9734153372 & 4.0   & 3.9734788524   & 16  \\
            \hline
    \end{tabular}
    \caption{European call option test: value at $\tau=0, S=S_0$.  Parameters in
             Table \ref{simple_euro}.  $N=$ number of nodes. Ratio is the
             ratio of successive changes.
     \label{simple_euro_results_1}}
  \end{center}
\end{table}

In order to stress these Fourier methods, we repeat this example, except now using an expiry time of $T = .001$.  
{\color{black}
Since the Green's function in the physical space converges
to a delta function as $T \rightarrow 0$, we can expect that this
will be challenging for Fourier methods as a large number of terms will be required in the Fourier series in order to
get an accurate representation of the Green's function in the physical
space.  }
The results for this test are shown in Table \ref{simple_euro_results_small}.
The monotone method with piecewise linear basis functions gives reasonable
results for all grid sizes.  The standard FST/CONV methods are
quite poor, except for very large numbers of nodes.  Indeed, using Simpson's rule 
on coarse grids even results in values
larger than $S_0 $ at $S = S_0 = 100$, which violates the provable 
bound for a call option.

\begin{table}[tb]
  \begin{center}
    \begin{tabular}{|l|l|l|l|l|l|l|l|l|} \hline
                & \multicolumn{4}{c|}{Monotone Methods} & \multicolumn{4}{c|}{FST/CONV}\\ \hline
                & \multicolumn{2}{c|}{Piecewise linear} &  \multicolumn{2}{c|}{Piecewise constant}
                & \multicolumn{2}{c|}{Trapezoidal} & \multicolumn{2}{c|}{Simpson} \\
                         \hline
    $N$ & Value & Ratio & Value & Ratio & Value & Ratio & Value & Ratio\\
                          \hline
    $2^9 $        & .19662316859 &      & .94284763015  &       &.24774086499  &      & 319.45747026    &  \\
    $2^{10}$      & .19467436458 &      & .041410269769 &       & .21909081933  &      &521.62802838    &   \\
    $2^{11}$      & .19376651687 & 2.1  & .15335986938  & -8.0  & .18611676723  &.87   &439.13444172    & -2.5   \\
    $2^{12}$      & .19346709107 & 3.0  & .18477993505  & 3.6   & .17728640855  &3.7   &27.002978049   &  0.2  \\
    $2^{13}$      & .19339179620 & 4.0  & .19127438852  & 4.8   & .18913280108  &-.75  &.19367805822   &  15     \\
    $2^{14}$      & .19337297842 & 4.0  & .19284673379  & 4.1   & .19231903134  &3.7   &.19338110881  &  $9\times 10^{4}$  \\
            \hline
    \end{tabular}
    \caption{European call option test: value at $\tau=0, S=S_0$.  Parameters in
             Table \ref{simple_euro} but $T=.001$.  $N=$ number of nodes. Ratio is the
             ratio of successive changes.
     \label{simple_euro_results_small}}
  \end{center}
\end{table}

This phenomenon can be explained by examining Figure \ref{Green_figs}, which
shows the projection of the  Green's functions for the monotone
method (piecewise linear basis function) and the truncated Green's function
for the FST/CONV method.  The projection of the  Green's function for the
monotone method in Figure \ref{Mono_green_fig} clearly has the expected
properties: very peaked near $x=0$ and non-negative for all $x$.  In contrast,
the FST/CONV numerical Green's function is oscillatory and negative
for some values of $x$.
{\color{black}
Figure \ref{Wiggle_plt} shows the FST/CONV (trapezoidal) solution
compared to the Monotone (piecewise linear) solution, on a coarse
grid with $512$ nodes.  The monotone
solution can never produce a value less than zero (to within the tolerance).
Note that monotonicity is clearly violated for the FST/CONV solution,
with negative values for a call option.
The oscillations are even more pronounced if Simpson's quadrature
is used for the FST/CONV method.}

\begin{figure}[tb]
\centerline{%
  \begin{subfigure}[b]{.45\linewidth}
    \centering
     \includegraphics[width=3.in]{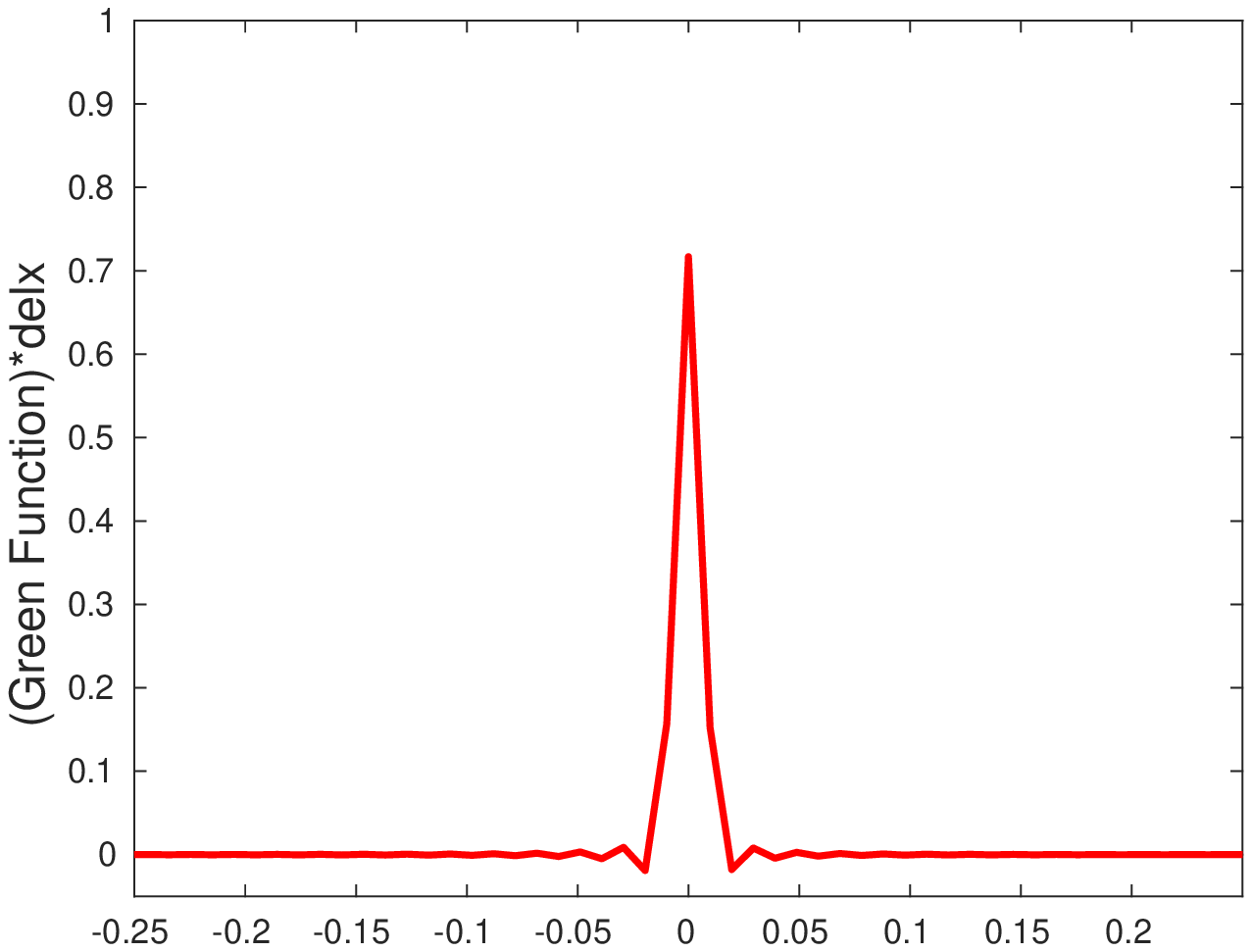}
    \caption{FST/CONV Green's function, truncated Fourier series.  Scaled
            by $\Delta x$. \label{FST_green_fig}}
  \end{subfigure}
  \begin{subfigure}[b]{.45\linewidth}
    \centering
     \includegraphics[width=3in]{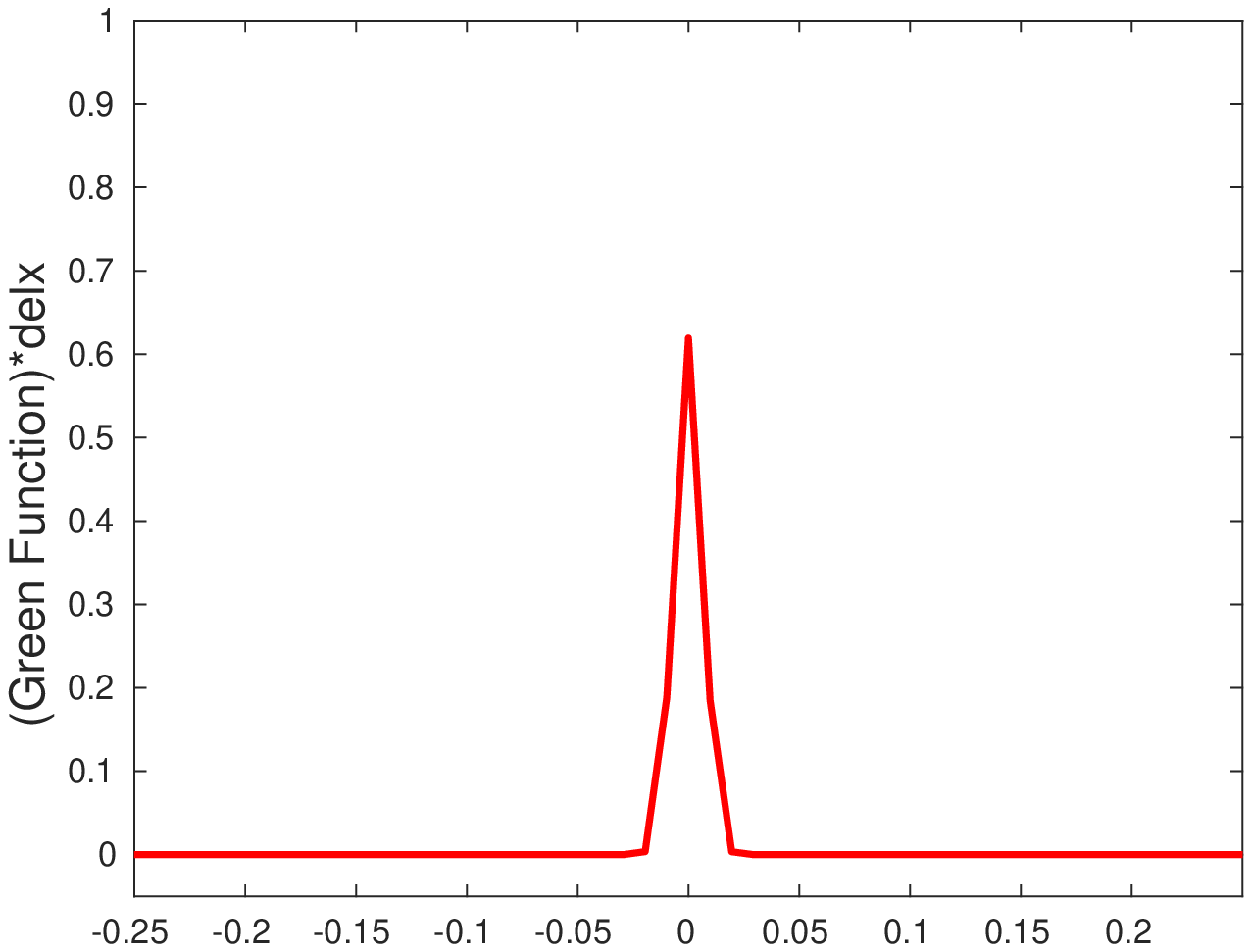}
    \caption{Monotone method, Green's function projected on linear basis functions.
             \label{Mono_green_fig}}
  \end{subfigure}
  }
 \caption{
European call option test: parameters in
Table \ref{simple_euro} but $T=.001$.  
FST/CONV method truncated Fourier series ($N=2048$).
Monotone method shows $\widetilde{g}( x, \Delta \tau, \alpha) \Delta x$, 
with $N=2048$, $\alpha = 4$.
\label{Green_figs}
}
\end{figure}

\begin{figure}[tbh]
\begin{center}
 \includegraphics[width=5.0in]{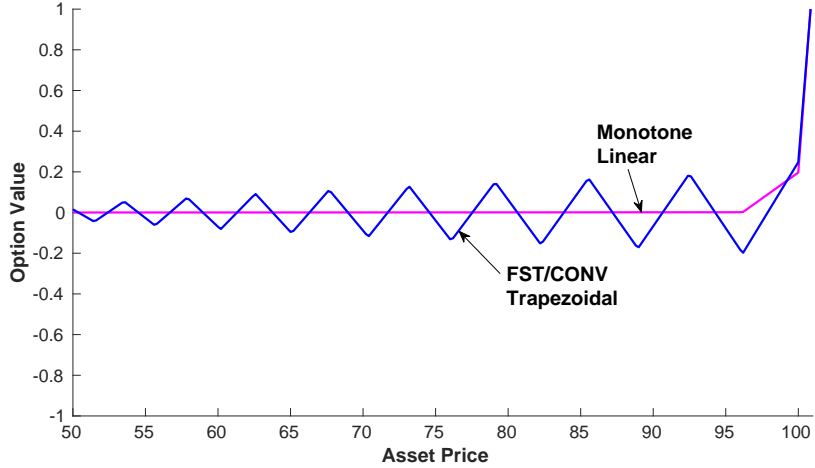}
\end{center}
\caption{European call option test.  Parameters in
             Table \ref{simple_euro} but $T=.001$.  $N=512$.
             For the monotone solution, $\alpha = 4$ (see
             Algorithm \ref{alg_start}).}
\label{Wiggle_plt}
\end{figure}

{\color{black}
\begin{remark}[Error in approximating equation (\ref{mono_5}) using equation (\ref{setup_1})]
An estimate of the error in computing the projected Green's function
is given in Appendix \ref{projection_convergence}, equation (\ref{green_error_2}).
We can see that a very small timestep effects the exponent in equation (\ref{green_error_2}).
For the extreme case of $T=.001$, $N=2056$, problem
in Table \ref{simple_euro},  we observe that for $\alpha = 8$,
then $test_2$ in Algorithm \ref{alg_start} is approximately $10^{-12}$,
indicating a very high accuracy projection can be achieved under extreme situations.
For the same problem ($2056$ nodes) with $T=.25$, we find that $test_2$ in 
Algorithm \ref{alg_start} is approximately $10^{-16}$ for $\alpha = 2$.
\end{remark}}

From these tests we can conclude both that the monotone method is 
robust for all timestep sizes and that for smooth problems and 
large timesteps, the monotone  method exhibits the expected
slower rate of convergence compared to high order techniques.

\subsection{Bermudan option with non-proportional discrete dividends}

Let us now assume that we have the same underlying process (\ref{jump_process}) as in the previous subsection, except
that the density function for $y = \log( \xi)$ is assumed normal
\begin{eqnarray}
    f(y) & = & \frac{1}{ \sqrt{2 \pi}~ \gamma} e^{ - \frac{ (y-\nu)^2}{2 \gamma^2} } 
              \label{merton_jumps}
\end{eqnarray}
with expectation $E[\xi] = e^{\nu+\gamma^2/2}$. Rather than a European option,  we will now consider a 
Bermudan put option which can be early exercised at fixed monitoring times $\tau_n$.
In addition, the underlying asset pays a fixed dividend amount $D$ at $\tau_n^-$, that is, immediately
after the early exercise opportunity in forward time. Between monitoring dates, the option price is 
given by equation (\ref{pde_1}). At monitoring dates we have the condition
\begin{eqnarray}
   v( x, \tau_n^+) & = & \max( v( \log(  \max(e^x -D, e^{x_{\min}}) ), \tau_n^-) , P( x ) )  \mbox{ with } \nonumber \\
                   P(x) & = &  {\mbox{ payoff }} = \max( K - e^x, 0)~.
              \label{control_test_1}
\end{eqnarray}
The expression $\max(e^x -D, e^{x_{\min}}) $ in equation (\ref{control_test_1})
ensures that the no-arbitrage condition  holds, that is, the dividend 
cannot be larger than the stock price, taking into account the localized grid.
Linear interpolation is used to evaluate the option value  in equation (\ref{control_test_1}).
The parameters for this problem are listed in Table \ref{bermudan} with the numerical results given in Table \ref{bermudan_results}.  All methods perform similarly, with second order convergence.
We can see here that once we use a linear interpolation to impose
the control, there is no benefit, in terms of convergence order, to using a high order method.

\begin{table}[tb]
  \begin{center}
    \begin{tabular}{ll} \toprule
      Expiry time & 10 years \\
      Strike K  & 100\\
      Payoff    & put \\
      Initial asset price $S_0$  & 100\\
      Risk-free rate $r$ & .05  \\
      Volatility $\sigma$  & .15 \\
      Dividend $D$ & 1.00 \\
      Monitoring frequency $\Delta \tau$ & 1.0 years\\
       $\lambda$  & .1 \\
      $\nu$ &  -1.08 \\
      $\gamma$ &  .4\\
      $x_{\max}$ & $ \log(S_0) + 10$\\
      $x_{\min}$ & $ \log(S_0) - 10$\\
      $\epsilon_1, \epsilon_2$ & $10^{-6}$ \\
      Asymptotic form $x \rightarrow \infty$  &$ A(x) = 0 $\\
           \bottomrule
    \end{tabular}
    \caption{Bermudan put option test.
     \label{bermudan}
      }
  \end{center}
\end{table}

\begin{table}[tb]
  \begin{center}
    \begin{tabular}{|l|l|l|l|l|l|l|l|l|} \hline
                & \multicolumn{4}{c|}{Monotone Methods} & \multicolumn{4}{c|}{FST/CONV}\\ \hline
                & \multicolumn{2}{c|}{Piecewise linear} &  \multicolumn{2}{c|}{Piecewise constant}
                & \multicolumn{2}{c|}{Trapezoidal} & \multicolumn{2}{c|}{Simpson} \\
                         \hline
    $N$ & Value & Ratio & Value & Ratio & Value & Ratio & Value & Ratio\\
                          \hline
    $2^9 $        & 24.811127744  &     & 24.806532754   &      & 24.801967268  &       & 24.802639420   &  \\
    $2^{10}$      & 24.789931363 &      &24.788800257  &        & 24.787670043  &       & 24.787731820  &   \\
    $2^{11}$      & 24.782264461 & 2.8  &2.4.781982815   & 2.6  & 24.781701225  &  2.4  & 24.781787212  & 2.5    \\
    $2^{12}$      & 24.781134292 & 6.8  &24.781063962   & 7.4   & 24.780993635  &  8.4  & 24.781007785  & 7.6   \\
    $2^{13}$      & 24.780822977 & 3.6  &24.780805394   & 3.6   & 24.780787811  &  3.4  & 24.780788678  & 3.6   \\
    $2^{14}$      & 24.780744620 & 4.0  &24.780740225   & 4.0   & 24.780735831  & 4.0   & 24.780737159  & 4.3  \\
            \hline
    \end{tabular}
    \caption{Bermudan put option test: value at $\tau=0, S=S_0$.  Parameters in
             Table \ref{simple_euro}.  $N=$ number of nodes. Ratio is the
             ratio of successive changes.
     \label{bermudan_results}}
  \end{center}
\end{table}


\section{Multiperiod mean variance optimal asset allocation problem}\label{sec8}

In this section we give an example of a realistic problem with complex controls, the 
{\em multiperiod mean variance optimal asset allocation problem}. Here we consider the case of an investor with a portfolio consisting
of a bond index and a stock index.  The {\em amount} invested in the stock index follows the process under the objective measure
\begin{eqnarray}
   \frac{dS_t}{S_{t^-}} &= (\mu -\lambda \kappa) dt + \sigma  dZ + d
     \biggl( \displaystyle \sum_{i=1}^{\pi_t} (\xi_i-1) \biggr)
    \label{jump_process_real}
\end{eqnarray}
with the double exponential jump size distribution (\ref{kou_jumps}), while the amount in the bond index follows
\begin{eqnarray}
  dB_t = rB_t~dt \label{bond_process}~.
\end{eqnarray}

The investor injects cash $q_n$ at time
time $t_n  \in \hat{\mathcal{T}}$ with total wealth at time $t$ being $W_t = S_t + B_t$. 
Let $ W_n^- = S_n^- + B_n^-$ be the total wealth before cash
injection.  
It turns out that in the multiperiod mean variance case,
in some circumstances, it is optimal to withdraw
cash from the portfolio \citep{Cui_2014,Dang2015a}. 
Denote this optimal
cash withdrawal as $c^*_n $.   The total
wealth after cash injection and withdrawal is then
\begin{eqnarray}
    W_n^+ & = & W_n^- + q_n -c^*_n~.
\end{eqnarray}
We then select an amount $b^*_n$ to invest in the bond,
so that
\begin{eqnarray}
   B_n^+ & = & b^*_n  ~~\mbox{ and } ~~
   S_n^+ = W_n^+ - b_n^* ~.
\end{eqnarray}
Since only cash withdrawals are allowed  we have $c^*_n \geq 0$.
The control at rebalancing time $t_n$ consists
of the pair $(b_n^*,~c^*_n)$. That is, after withdrawing
$c^*_n$ from the portfolio we rebalance to a portfolio with $S_n^+$
in stock and $B_n^+ $ in bonds. A no-leverage and no-shorting constraint is enforced by
\begin{eqnarray}
   0 \leq  b_n^*  \leq  W_n^+  ~.
\end{eqnarray}
In order to determine the mean-variance optimal solution to this asset allocation
problem, we make use of
the embedding result \citep{li-ng:2000,zhou-li:2000}.  
The mean-variance optimal strategy can be posed as
\begin{eqnarray}
    \min_{ \{ (b_0^*,c^*_0), \ldots, (b^*_{M-1}, c^*_{M-1}) \}}  & & E[ (W^* - W_T)^2] \nonumber \\
   {\mbox{ subject to}} & & \begin{cases}
                       (S_t, B_t) {\mbox{ follow processes }} (\ref{jump_process_real}),  
                            (\ref{bond_process})~~; ~~t \notin \hat{\mathcal{T}}\\
                          W_{n}^+ = S_{n}^- +B_n^- + q_n -c^*_n, \\
                          ~~~~~~~~~~S_n^+ =  W_n^+ -b_n^* , 
                                     B_n^+ = b_n^*
                       ~~;~~ t \in \hat{\mathcal{T}}\\
                       0 \leq b_n^*   \leq W_n^+ \\
                       c^*_n  \geq 0
                   \end{cases} ~,
             \label{embedded_problem}
\end{eqnarray}
where $W^*$ can viewed as a parameter which traces out the efficient frontier.

Let
\begin{eqnarray}
    Q_\ell = \sum_{j=\ell+1}^{M-1} e^{ -r( t_j- t_\ell)} q_j
                 \label{Q_eqn}
\end{eqnarray}
be the discounted future contributions to the portfolio at time $t_\ell$.
If
\begin{eqnarray}
   (W_n^- + q_n)   > W^* e^{-r(T-t_n)} - Q_{n}~,
\end{eqnarray}
then the optimal strategy is to
withdraw cash $c_n^* =  W_n^- + q_n - (W^* e^{-r(T-t_n)} - Q_{n})$ from the portfolio,
and invest the remainder $\bigl( W^* e^{-r(T-t_n)} - Q_{n} \bigr) $  in the risk free  asset.
This is optimal in this case since then $E[ (W^* - W_T)^2] = 0$
\citep{cui-li-wang-zhu:2012,Dang2015a}, which is the minimum of problem (\ref{embedded_problem}).

In the following we will refer to any cash withdrawn from the portfolio as a {\em surplus} or {\em free cash flow} \citep{Bauerle2015}.  For the sake of discussion, we will assume that the surplus
cash is invested in a risk-free asset, but does not contribute to the
computation of the terminal mean and variance.  Other possibilities
are discussed in \citet{Dang2015a}.

The solution of
problem (\ref{embedded_problem}) is the so-called {\em pre-commitment} solution. 
We can interpret the pre-commitment solution
in the following way.  At $t=0$, we  decide which Pareto point is
desirable (that is, a point on the efficient frontier).  This fixes
the value of $W^*$.
At  any time $t>0$,
we can  regard the optimal policy as the time-consistent solution to the
problem of minimizing the expected quadratic loss with respect to the
fixed target wealth $W^*$, which can be viewed as a useful
practical objective function \citep{vigna:2014,Vigna_2017b}.

\subsection{Optimal control problem}

A brief overview of the PIDE for the solution of the mean-variance optimal control problem is given below (we refer the
reader to \citet{dang-forsyth:2014a} for additional details).
 
Let the value function $v(x,b,\tau)$ with $\tau = T-t$ be defined as
\begin{eqnarray}
   v(x, b, \tau) & = & \inf_{ \{ (b^*_0,c^*_0), \ldots, (b^*_{M}, c^*_{M}) \}} 
                      \biggl \{
                           E \biggl[ ~( \min( W_T - W^*, 0) )^2  ~\bigg \vert ~ \log S(t) = x, B(t) = b
                             \biggr]
                      \biggr\} ~. 
           \label{mean_var_embed_a}
\end{eqnarray}
Let the set of observation times  backward in time be $ \mathcal{T} = \{ \tau_0, \tau_1, \ldots, \tau_M\}$.
For $\tau \notin \mathcal{T}$, $v$ satisfies
\begin{eqnarray}
  v_\tau & = & \mathcal{L}v + rb v_b ~~~\mbox{ where }  \nonumber \\
          \mathcal{L} v & \equiv&  \frac{\sigma^2}{2} v_{xx} + (\mu - \frac{\sigma^2}{2} - \lambda \kappa) v_x
                  - (\mu +\lambda) v + \lambda \int_{-\infty}^{\infty} v( x + y) f(y) ~dy \nonumber \\
          v( x, b, 0) & = & ( \min( e^x + b -W^*, 0) )^2 ~ \label{mean_var_pide}
\end{eqnarray}
on the localized domain $(x,b) \in [x_{\min}, x_{\max} ] \times [0, b_{\max}]$.

If $g(x, \tau)$ is the Green's function of $v_{\tau} = \mathcal{L} v$ then the solution of 
equation (\ref{mean_var_pide}) at $\tau_{n+1}^-$, given the solution at $\tau_n^+$,
$\tau_n \in \mathcal{T}$ is
\begin{eqnarray}
    v(  x, b, \tau_{n+1}^-) & = & \int_{x_{\min} }^{x_{\max}} g(x - x^{\prime}, \Delta \tau)
                                   v( x^{\prime}, b e^{rb \Delta \tau}, \tau_n^+) ~~~\mbox{ with }~~~ 
                             \Delta \tau = \tau_{n+1} - \tau_n ~. \label{mean_var_1}
\end{eqnarray}
Equation (\ref{mean_var_1}) can be regarded as a combination of a
Green's function step for the PIDE $v_{\tau} = \mathcal{L} v$ and a characteristic
technique to handle the $rbv_b$ term.
At rebalancing times $\tau_n \in \mathcal{T}$, 
\begin{eqnarray}
     v( x, b, \tau_n^+) & = & \min_{ (b^*, c^*)}   v( x^{\prime}, b^*, \tau_n^-) \nonumber \\
    {\mbox{ subject to}} & & \begin{cases}
                           c^* = \max( e^x + b + q_{M-n} -Q_{M-n}, 0) \\
                           W^{\prime} = e^x + b + q_{M-n} - c^*\\
                           0 \leq b^* \leq W^{\prime} \\
                           x^{\prime} = \log\bigl( \max( W^{\prime} - b^*, e^{x_{\min}} )
                                            \bigr)
                   \end{cases} ~
             \label{mean_var_control}
\end{eqnarray}
where $Q_{\ell}$ is defined in equation (\ref{Q_eqn}).

\subsection{Computational details}
We solve problem (\ref{mean_var_embed_a}) combined with the optimal control 
(\ref{mean_var_control}) on the localized domain 
 $(x,b) \in [x_{\min}, x_{\max} ] \times [0, b_{\max}]$.
We discretize in the $x$ direction using an equally
spaced grid with $N_x$ nodes and an unequally
spaced grid in the $B$ direction with $N_b$ nodes.
Set $B_{\max} = e^{x_{\max}}$ and denote the discrete solution at $(x_m, b_j, \tau_n^+)$ by
\begin{eqnarray}
   (v_{m,j}^n)^+ & = & v( x_m, b_j, \tau_n^+)
         \nonumber \\
    (v^n)^+ & = & \{ (v_{m,j}^n)^+ \}_{m=-N_x/2,\ldots,N_x/2-1; j=1, \ldots, N_b} \nonumber
                \\
    (v^n_j)^+ & = & [ (v_{-N_x/2,j}^n)^+, \ldots, (v_{N_x/2-1,j}^n)^+ ] .
\end{eqnarray}

Let $\mathcal{I}_{ \Delta x, \Delta b}( x, b) (v^n)^- $ be a two dimensional linear
interpolation operator acting on the discrete solution values $(v^n)^-$.
Given the solution at $\tau_n^+$, we use Algorithm \ref{alg_monotone}
to advance the solution to $\tau_{n+1}^-$.  For the mean
variance problem, we extend this algorithm to approximate
equation (\ref{mean_var_1}),  which is described in
Algorithm \ref{advance_time}.

\begin{algorithm}[!h]
\caption{
\label{advance_time}
Advance time $(v^n)^+ \rightarrow (v^{n+1})^-$.
}
 \begin{algorithmic}[1]
    \REQUIRE $ (v^n)^+~;~ \widetilde{G} = \{\widetilde{G}( \omega_m, \Delta \tau, \alpha) \}, 
                 ~ m=-N_x/2, \ldots, N_x/2-1$
               (from Algorithm \ref{alg_start}) 
    \FOR[Advance time loop]{$j=1,\ldots, N_b$}
       \STATE $v^{int}_{m,j} = \mathcal{I}_{ \Delta x, \Delta b} (x_m, b_j e^{r \Delta \tau}) (v^n)^+ ~;~
                                  m=-N_x/2, \ldots, N_x/2-1$
         \STATE $ \widetilde{V} = FFT [~ v^{int}_{j} ~] $
         \STATE $(v^{n+1}_j)^- =  iFFT[~ \widetilde{V} \circ \widetilde{G} ~]~~~~~~~~~~~~~~~~~~~~~~~~~~~~~~~~~~~~~~~~~~~~~~~~~~~$ \COMMENT{iFFT( Hadamard product )}
    \ENDFOR \COMMENT{End advance time loop}

 \end{algorithmic}
\end{algorithm}

In order to advance the solution from $\tau_{n+1}^-$ to $\tau_{n+1}^+$,
we approximate the solution to the optimal control 
problem (\ref{mean_var_control}). 
The optimal control is approximated by discretizing the candidate control $b^*$
using the discretized $b$ grid and exhaustive search:
\begin{eqnarray}
     v( x_m, b_j, \tau_n^+) & = & \min_{ (b^*, c^*)}   
        \mathcal{I}_{ \Delta x, \Delta b} (( x^{*}, b^*)  ( v^{n+1})^- \nonumber \\
    {\mbox{ subject to}} & & \begin{cases}
                           c^* = \max( e^{x_m} + b_j + q_{M-n} -Q_{M-n}, 0) \\
                           W^{\prime} = e^{x_m} + b_j + q_{M-n} - c^*\\
                           b^* \in \{ b_1, \ldots, \min( b_{\max}, W^{\prime}) \} \\
                           x^{*} = \log\bigl( 
                                     \max( W^{\prime} - b^*, e^{x_{\min}} ) 
                                       \bigr)
                   \end{cases} ~.
             \label{mean_var_control_b}
\end{eqnarray}
This is a convergent algorithm to the solution of the original control problem as $N_x, N_b \rightarrow \infty$. 
This can be proved using similar steps as in the finite
difference case \citep{dang-forsyth:2014a}. For brevity we omit the proof. 

Using the control determined from solving problem (\ref{mean_var_embed_a}),
we can determine $E[W_T]$ and $std[W_T]$ by solving an additional
linear PIDE, see \citep{dang-forsyth:2014a} for details.


\begin{remark}[Practical implementation enhancements]
As noted by several authors, since the Green's function  and the solution is real,
the Fourier coefficients satisfy symmetry relations. Hence 
$\widetilde{G}( \omega_k, \Delta \tau, \alpha)$ and $ \widetilde{V}$ need to be computed
and stored only for $\omega_k \geq 0$.  
It is also possible to arrange the 
step in line $2$ of Algorithm $4$ and the 
optimal control step of (\ref{mean_var_control_b}) so that only a single interpolation error is introduced at each node.
Note that the Fourier series representation of the Green's function
is only used to compute the projection of the Green's function
onto linear basis functions.  After this initial step, we use FFTs
only to efficiently carry out a dense matrix-vector multiply (the convolution)
at each step.  Use of the FFT here is algebraically identical to carrying out the convolution
in the physical space.  The only approximation being used in this step is the periodic extension of the solution.
\end{remark}

\subsection{Numerical example}

The data for this problem is given in Table \ref{mean_var_num_1}.
The data was determined by fitting to the monthly returns
from the Center for Research in Security Prices (CRSP) through
Wharton Research Data Services, for the period {\em{1926:1- 2015:12}}.\footnote{More
specifically, results presented here were calculated based
on data from Historical Indexes, \copyright 2015 Center for Research
in Security Prices (CRSP), The University of Chicago Booth School of
Business.  Wharton Research Data Services was used in preparing this article.
This service and the data available thereon constitute valuable intellectual
property and trade secrets of WRDS and/or its third-party suppliers.}
We
use the monthly CRSP value-weighted (capitalization weighted) total
return index (``vwretd''), which includes
all distributions for all domestic stocks trading on major US exchanges, and the monthly 90-day Treasury bill return index from CRSP.  Both this index
and the equity index are in nominal terms, so we adjust them for
inflation by using the US CPI index (also supplied by CRSP).
We use real indexes since investors saving for retirement
are focused on real
(not nominal) wealth goals.

\begin{table}[tb]
  \begin{center}
    \begin{tabular}{ll} \toprule
      Expiry time $T$ & 30 years \\
      Initial wealth   & 0 \\
      Rebalancing frequency & yearly\\
      Cash injection $\{q_i\}_{i=0, \ldots, 29}$ & 10\\
      Real interest rate $r$ & .00827  \\
      Volatility $\sigma$  &  .14777 \\
      $ \mu $   &  .08885 \\
       $\lambda$  & .3222 \\
      $\eta_1$ &  4.4273 \\
      $\eta_2$ &   5.262\\
      $p_{u}$ &  0.2758\\
      $x_{\max}$ & $ \log(100) + 5 $\\
      $x_{\min}$ & $ \log( 100) - 10$\\
      $\epsilon_1, \epsilon_2$ & $10^{-6}$ \\
      Asymptotic form $E[ (W_T - W^*)^2], x \rightarrow \infty$ &
                        $ A(x) = 0$ \nonumber \\
           \bottomrule
    \end{tabular}
    \caption{Multiperiod mean variance example.  
    Parameters determined by fitting to the real (inflation adjusted) CRSP data for the
    period {\em 1926:1-2015:12}.  Interest rate is the average real return on 90 day T-bills.
     \label{mean_var_num_1}}
  \end{center}
\end{table}

As a first test, we fix $W^* = 1022$, and then increase the number nodes
in the $x$ direction ($N_x$) and in the $b$ direction ($N_b$). We use the
monotone scheme, with linear basis functions.  In Table~\ref{mean_var_table_a}, we show the value
function $v( 0, 0,T)$ and the mean $E[W_T]$  and standard deviation $std[W_T]$ of the final wealth,
which are of practical importance.
The value function shows smooth second order convergence, which is to be expected.
Even though the optimal control is correct only to order $\Delta b$ (since
we optimize by discretizing the controls and using exhaustive search),
the value function is correct to $O(\Delta b)^2$ (since it is an extreme
point).

We expect that the  derived quantities $E[W_T], std[W_T]$, which are based on the  controls computed as a 
byproduct of computing the value function, should show a lower order convergence. Recall that these quantities 
are evaluated by storing the controls and then solving a linear PIDE.  In fact we do see  somewhat erratic 
convergence for these quantities.
As an independent check, we used the stored controls from solving for the value function (on
the finest grid),
and then carried out Monte Carlo simulations to directly compute the mean and
standard deviation of the final wealth.  The results are shown in Table \ref{MC_table}.

\begin{table}[tb]
  \begin{center}
    \begin{tabular}{|l|l|l|l|l|l|l|l|} \hline
    $N_x$ & $N_b$ & Value function & Ratio & $E[W_T]$ & Ratio & $std[W_T]$ & Ratio \\
             \hline
     512  & 305 &  97148.899100 & N/A   & 824.02599269 & N/A & 240.73884508 & N/A \\
     1024 & 609 &  97042.740997 & N/A   & 824.07104985 & N/A & 240.55534019 & N/A \\
     2048 & 1217 & 97014.471301 & 3.8  & 824.09034690 &  2.3 & 240.51245396 &  4.3 \\
     4096 & 2433 & 97007.286530 & 3.9  & 824.08961667 & -26  & 240.49691620 & 2.7 \\
     8192 & 4865 & 97005.451814 & 3.9  & 824.09295889 & -.22 & 240.49585213 & 14.6 \\
            \hline
    \end{tabular}
    \caption{Test of convergence of optimal multiperiod mean variance investment strategy.  
              Monotone method, linear basis functions.
             Parameters in Table \ref{mean_var_num_1}.  Fixed $W^* = 1022$.
              Ratio is the
             ratio of successive changes.
     \label{mean_var_table_a}}
  \end{center}
\end{table}

\begin{table}[tb]
  \begin{center}
    \begin{tabular}{|l|l|l|} \hline
    $N_{sim}$ & $E[W_T]$  & $std[W_T]$ \\
     \hline
    $1.6 \times 10^5$ & 824.3425 (1.55) & 240.2263 \\
    $6.4 \times 10^5$ & 823.6719 (0.78) & 240.7278 \\
    $2.56 \times 10^6 $ &824.0077 (0.39) &240.4336 \\
    $1.024 \times 10^7$ & 824.1043 (0.19) & 240.5217 \\
            \hline
    \end{tabular}
    \caption{Monte Carlo simulation results, based on optimal controls from solving for the
             value function using the monotone Fourier technique.  Numbers in brackets are
             the standard error, 99\% confidence level, for the mean.
             Compare with Table \ref{mean_var_table_a}.
             Parameters in Table \ref{mean_var_num_1}.  Fixed $W^* = 1022$.
     \label{MC_table}}
  \end{center}
\end{table}

Of more practical interest is the following computation.  In Table \ref{const_wt}
we show the results obtained by rebalancing to a constant weight in equities
at each monitoring date.  We specify that the portfolio is rebalanced to
$.60$ in stocks and $.40$ in bonds (a common default recommendation).
We then solve for the value function using the monotone Fourier method, 
allowing $W^*$ to vary, but fixing the expected value so that $E[W_T]$ is the
same as for the $60:40$ constant proportion strategy.  This is done by using
a Newton iteration, where each evaluation of the residual function requires
a solve for the value function and the expected value equation.
The results of this test are shown in Table \ref{mean_specd}.
In this case, fixing the mean and allowing $W^*$ to vary, results
in smooth convergence of the standard deviation.
From a practical point of view, we can see that the optimal
strategy has the same expected value as the constant
proportion strategy, but the standard deviation is reduced from
$512$ to $241$, and the median of the optimal strategy
is $936$ compared to a median of $704$ for the constant
proportion strategy.
A heat map of the optimal strategy is shown in Figure \ref{optimal_stock}.

\begin{table}[tb]
  \begin{center}
    \begin{tabular}{|l|l|l|} \hline
    $E[W_T]$  & $std[W_T]$ & $Median[W_T]$ 
            \\
     \hline
    824.10047  & 511.8482 & 704 \\
            \hline
    \end{tabular}
    \caption{Portfolio rebalanced to $.60$ in stocks and $.40$ in bonds at
             each monitoring date.  Closed form expression for mean and standard
             deviation.  Median computed using Monte Carlo simulation.
             Parameters in Table \ref{mean_var_num_1}. 
     \label{const_wt}}
  \end{center}
\end{table}

\begin{table}[tb]
  \begin{center}
    \begin{tabular}{|l|l|l|l|l|l|} \hline
    $N_x$ & $N_b$ &$E[W_T]$  & $std[W_T]$ & Ratio  
          \\
     \hline
   512  & 305    & 824.10047 &  240.79440842 &  N/A  \\
   1024 & 609 &    824.10047 &  240.57925928 & N/A    \\
   2048 & 1217  &  824.10047 &  240.52022512  & 3.6   \\
   4096 & 2433  &   824.10047 & 240.50571976 & 4.1   \\
   8192 & 4865 &    824.10047 & 240.50220544 & 4.1  \\
            \hline
    \end{tabular}
    \caption{At each refinement level $W^*$ is determined so that $E[W_T] = 824.10047$.
             The median on the finest grid is computed by storing the controls and using
             Monte Carlo simulation.  $Median[W_T] = 936$. Ratio is the ratio of
             successive changes.
             Parameters in Table \ref{mean_var_num_1}.  
     \label{mean_specd}}
  \end{center}
\end{table}

\begin{figure}[tb]
\centerline{
     \includegraphics[width=4.in]{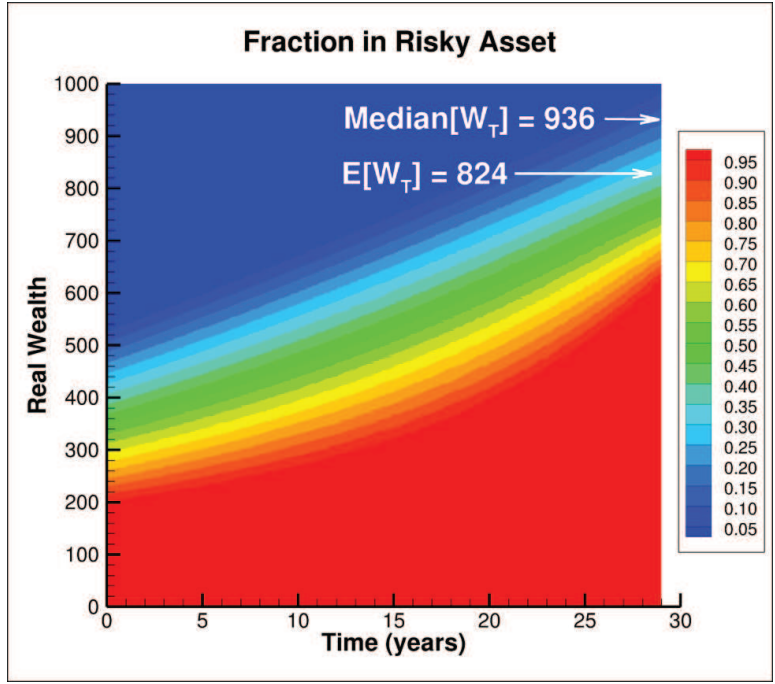}
  }
 \caption{Optimal strategy, fraction of portfolio invested
in stock, as a function of current total real wealth $W_t = S_t + B_t$
and forward time $t$. Parameters in Table \ref{mean_var_num_1}.
\label{optimal_stock}
}
\end{figure}

\section{Conclusions}\label{sec9}

Many problems in finance give rise to discretely monitored complex control problems.  
In many cases, the optimal controls are not of a simple  bang-bang type. It then happens that  
a numerical procedure must be used to determine the optimal control at discrete points in the  
physical domain.  In these situations, there is little hope of obtaining a high order accurate
solution, after the control is applied.  If we desire a monotone scheme, which increases robustness
and reliability for our computations, then we are limited to the use of linear interpolation, hence 
we can get at most second order accuracy.

{\color{black}
Traditional FST/CONV methods assume knowledge of the Fourier
transform of the Green's function but then approximate this 
function by a truncated Fourier series. As a result these methods are not monotone. Instead when the 
Fourier transform of the Green's 
function is known, then we carry out a pre-processing step by projecting  the Green's function
(in the physical space) onto a set of 
linear basis functions. These integrals can then be computed to within a 
specified tolerance and this allows 
us to guarantee a monotone scheme to within the tolerance. 
This monotone scheme is robust to small timesteps, 
which is observably not the case for the standard FST/CONV methods, 
and indeed is a major pitfall of the latter methods.}

When the Green's function depends on time only through the timestep size and the monitoring dates for the control are equally spaced (which is typically the case), then the final monotone algorithm  has the same complexity per step as the original 
FST/CONV algorithms, and the same order of convergence for 
smooth control problems. It is a simple process to add this preprocessing
step to existing FST/CONV software. This results in more robust and more  reliable algorithms for optimal stochastic control problems.

\section{Acknowledgements and Declaration of Interest}
This work was supported by the Natural Sciences and
Engineering Research Council of Canada (NSERC).  The authors report no conflicts of interest. 
The authors alone are responsible for the content and writing of the paper.


\appendix

\section*{Appendices}

\appendix

\section{Green's functions}\label{green_appendix}

Consider the PIDE
\begin{eqnarray}
    v_{\tau} & = & \frac{\sigma^2}{2} v_{xx} + (\mu - \frac{\sigma^2}{2} - \lambda \kappa) v_x
                   -( \rho + \lambda) v + \lambda \int_{-\infty}^{+\infty} v( x + y) f(y) ~dy
    ~. \label{green_app_1}
\end{eqnarray}
If,  for example, $\rho = \mu = r$ where $r$ is the risk-free rate, then this
is the option pricing equation, while if $\rho = 0$ then the right hand side
of equation (\ref{green_app_1}) is $\mathcal{L} v$ in the mean variance case.

Let 
\begin{eqnarray}
   v( \omega, \tau ) & = & \int_{-\infty}^{\infty} V(\omega, \tau) e^{2 \pi i \omega x}~d\omega \nonumber \\
   f( \omega, \tau ) & = & \int_{-\infty}^{\infty} F( \omega, \tau) e^{2 \pi i \omega x}~d\omega
  ~. \label{green_app_2}
\end{eqnarray}
Substituting equation (\ref{green_app_2}) into equation (\ref{green_app_1}) gives
\begin{eqnarray}
   V(\omega, \tau)_{\tau} & = & \Psi(\omega) V( \omega, \tau ) ~~ \mbox{ where } \label{green_app_3a} \\
   \Psi(\omega) & = & \biggl( -\frac{\sigma^2}{2} (2 \pi \omega)^2 
                       + ( \mu - \lambda \kappa - \frac{\sigma^2}{2}) ( 2 \pi i \omega)
                        -( \rho + \lambda) +  \lambda \overline{F}( \omega) 
                   \biggr) ~, \nonumber 
\end{eqnarray}
with $\overline{F}( \omega)$ being the complex conjugate of ${F}( \omega)$.
Integrating equation (\ref{green_app_3a}) gives
\begin{eqnarray}
    V(\omega, \tau + \Delta \tau) = e^{ \Psi(\omega) \Delta \tau}  V(\omega, \tau) ~, \nonumber
\end{eqnarray}
from which we can deduce that the Fourier transform of the Green's function $G( \omega, \Delta \tau)$
is
\begin{eqnarray}
    G( \omega, \Delta \tau) & = & e^{ \Psi(\omega) \Delta \tau} ~.
              \label{green_app_4}
\end{eqnarray}

In the case of a double exponential jump distribution (\ref{kou_jumps}), then
\begin{eqnarray}
  \overline{F}( \omega) & = & \frac{ p_{u}}{ 1 - 2 \pi i \omega/ \eta_1} 
                              + \frac{1 - p_{u}}{ 1 + 2 \pi i \omega / \eta_2} \nonumber 
\end{eqnarray}
while in the case of a log-normal jump size distribution (\ref{merton_jumps})
\begin{eqnarray}
   \overline{F}( \omega) & = & e^{ 2( \pi i \omega \nu - ( \pi \omega \gamma)^2 ) } ~.
                       \nonumber 
\end{eqnarray}
From equation (\ref{green_app_3a}) and (\ref{green_app_4}) we have
\begin{eqnarray}
     G( 0 , \Delta \tau) & = & e^{- \rho \Delta \tau} ~, \nonumber
\end{eqnarray}
which means that  in these cases $C_1 = \int_{\mathbb{R}} g(x, \Delta \tau)~dx$ is
\begin{eqnarray}
    C_1 & = & \begin{cases}
                 e^{-r \Delta \tau} & {\mbox{ option pricing }}\\
                 1                  & {\mbox{ mean variance asset allocation}}
              \end{cases} ~. \nonumber
\end{eqnarray}

\section{Convergence of truncated Fourier series for the projected Green's functions.}

\label{projection_convergence}
Since the Green's function for equation (\ref{green_app_1}) is a smooth function
for any finite $\Delta \tau$, we can expect uniform convergence of the
Fourier series to the exact Green's function, assuming that
$\sigma > 0$.  This can also be seen from the 
exponential decay of the Fourier coefficients, which we
demonstrate in this Appendix.  Since the exact Green's function
is non-negative, the projected Green's function (\ref{g_tilde}) then
converges to a non-negative value at every point $y_j$.  Consider the
case of the truncated projection on linear basis functions
\begin{eqnarray}
    \widetilde{g}( y_j, \Delta \tau , \alpha) & = & 
                                           \frac{1}{P} \displaystyle \sum_{k=-\alpha N/2}^{\alpha N/2-1}
                                                     e^{ 2 \pi i \omega_k  y_j}
                                          \biggl( \frac{ \sin^2 \pi \omega_k \Delta x}
                                                         { ( \pi \omega_k \Delta x)^2}
                                           \biggr) G(\omega_k, \Delta \tau) ~~~  \mbox{ with } ~~~
                    \omega_k = \frac{k}{P} ~  \nonumber 
\end{eqnarray}
and $\Delta x = \frac{P}{N}$. The error in the truncated series is then
\begin{eqnarray}
   | \widetilde{g}( y_j, \Delta \tau , \alpha) - \widetilde{g}( y_j, \Delta \tau , \infty)|
    & = & \bigg \vert \displaystyle \frac{1}{P} \displaystyle \sum_{k=\alpha N/2}^{\infty}
                                                     e^{ 2 \pi i \omega_k  y_j}
                                          \biggl( \frac{ \sin^2 \pi \omega_k \Delta x}
                                                         { ( \pi \omega_k \Delta x)^2}
                                           \biggr) G(\omega_k, \Delta \tau) \nonumber \\
                            & & + 
                             \frac{1}{P} \displaystyle \sum_{k=-\infty}^{ - \alpha N/2-1}
                                                     e^{ 2 \pi i \omega_k  y_j}
                                          \biggl( \frac{ \sin^2 \pi \omega_k \Delta x}
                                                         { ( \pi \omega_k \Delta x)^2}
                                           \biggr) G(\omega_k, \Delta \tau)
                      \bigg \vert \nonumber \\
          & \leq & \displaystyle \frac{2}{P}
                  \displaystyle \sum_{k=\alpha N/2}^{ \infty}
                                           \frac{ 1}
                                                         { ( \pi \omega_k \Delta x)^2}
						     | G(\omega_k, \Delta \tau) | 
                   \nonumber \\
             &\leq & \displaystyle \frac{2}{P} \cdot \frac{ 4 }{ \pi^2 \alpha^2}
                   \sum_{k=\alpha N/2}^{\infty} | G(\omega_k, \Delta \tau) | ~.
          \label{bound_app_1}
\end{eqnarray}
Noting that $Re( \overline{F}( \omega) ) \leq 1 $, we then have
\begin{eqnarray}
  Re( \Psi(\omega) ) & = & - \frac{ \sigma^2 (2 \pi \omega)^2}{2} -(\rho + \lambda)
                           + \lambda Re( \overline{F}( \omega) ) \nonumber \\
                     & \leq & - \frac{ \sigma^2 (2 \pi \omega)^2}{2} -(\rho + \lambda)  + \lambda               
                      \leq   - \frac{ \sigma^2 (2 \pi \omega)^2}{2} \nonumber
\end{eqnarray}
since $\rho \geq 0$.
Hence
\begin{eqnarray}
    | G(\omega, \Delta \tau) | & = & | e^{  \Psi(\omega) \Delta \tau} | 
                                \leq  e^{ - \frac{ \sigma^2 (2 \pi \omega)^2 \Delta \tau}{2} }.
              \label{bound_app_3}
\end{eqnarray}
If we let $C_4 = \frac{ 2 \sigma^2  \pi^2 \Delta \tau}{P^2} $ then equations (\ref{bound_app_3}) and (\ref{bound_app_1}) implies
\begin{eqnarray}
  | \widetilde{g}( y_j, \Delta \tau , \alpha) - \widetilde{g}( y_j, \Delta \tau , \infty)|
  & \leq & \frac{8}{ P \pi^2 \alpha^2}
              \displaystyle \sum_{k=\alpha N/2}^{\infty}  e^{ -C_4 k^2} .
           \nonumber 
\end{eqnarray}
Bounding the sum gives
\begin{eqnarray}
   | \widetilde{g}( y_j, \Delta \tau , \alpha) - \widetilde{g}( y_j, \Delta \tau , \infty)| ~
      & \leq & ~\frac{ 8}{ P \pi^2 \alpha^2} \cdot \frac{ e^{- C_4 N^2 \alpha^2 /4}}{1 - e^{- C_4 N \alpha}}
     ~.
      \label{bound_app_5}
\end{eqnarray}

Consider the monotonicity test in Algorithm \ref{alg_start}, line \ref{test_1}, given by
\begin{eqnarray}
   test_1 = \displaystyle \sum_j \Delta x \min( \widetilde{g}( y_j, \Delta \tau, \alpha) , 0) . \nonumber
\end{eqnarray}
Noting that $\widetilde{g}( y_j, \Delta \tau , \infty) \geq 0$,
and from equation (\ref{bound_app_5}) and $ \sum_j \Delta x = P$, we  have
\begin{eqnarray}
   | test_1 | & \leq &  
                 \frac{ 8}{ \pi^2 \alpha^2} 
    \cdot  \frac{ e^{- C_4 N^2 \alpha^2 /4}}{1 - e^{- C_4 N \alpha}} ~,
            \label{test_1_bound}
\end{eqnarray}
so that usually this test is satisfied to within round off for $\alpha = 2, 4$.

Consider now the accuracy test on line \ref{test_2} of Algorithm \ref{alg_start}, given by
\begin{eqnarray}
   test_2 = \displaystyle \max_j \Delta x | \widetilde{g}( y_j, \Delta \tau, \alpha) -
                              \widetilde{g}( y_j, \Delta \tau, \alpha/2)| \nonumber
\end{eqnarray}
which we see is bounded by
\begin{eqnarray}
   |test_2| & \leq  & \Delta x \displaystyle \max_j 
                        \biggl(  | \widetilde{g}( y_j, \Delta \tau, \alpha) - 
                             \widetilde{g}( y_j, \Delta \tau, \infty)  |
                     + | \widetilde{g}( y_j, \Delta \tau, \alpha/2) - 
                             \widetilde{g}( y_j, \Delta \tau, \infty)  | 
                       \biggr) \nonumber \\
                 & \leq & \Delta x \displaystyle \max_j 
                          \biggl( 2 | \widetilde{g}( y_j, \Delta \tau, \alpha/2) - 
                             \widetilde{g}( y_j, \Delta \tau, \infty)  | \biggr)  \nonumber \\
                 & \leq &   
                     \frac{64}{ \pi^2 \alpha^2} \cdot \frac{ \Delta x}{P} 
                  \cdot \frac{ e^{ - C_4 N^2 \alpha^2 /16} }
                       { 1 - e^{- C_4 N \alpha/2}    } ~. \label{green_error_2}
\end{eqnarray}
This test will also be satisfied for small values of $\alpha$, although this will require larger values of 
$\alpha$ than for the monotonicity test (\ref{test_1_bound}).


\end{document}